\newtheorem{definition}{Definition}
\newtheorem{theorem}{Theorem}
\begin{document}
%
\title{FERN: Fair Team Formation for Mutually Beneficial Collaborative Learning}
%
%
%

\author{Maria~Kalantzi,
        Agoritsa~Polyzou,
        and~George~Karypis
\thanks{M. Kalantzi and G. Karypis are with the Department of Computer Science and Engineering, University of Minnesota, Twin Cities,
MN, USA. E-mail: \{kalan028, karypis\}@umn.edu.}
\thanks{A. Polyzou is with the Massive Data Institute and Computer Science Department, Georgetown University, Washington, D.C., USA. E-mail: ap1744@georgetown.edu.}
}

\maketitle

\begin{abstract}
    Automated \textit{Team Formation} is becoming increasingly important for a plethora of applications in open source community projects, remote working platforms, as well as online educational systems. The latter case, in particular, poses significant challenges that are specific to the educational domain. Indeed, teaming students aims to accomplish far more than the successful completion of a specific task. It needs to ensure that all members in the team benefit from the collaborative work, while also ensuring that the participants are not discriminated with respect to their protected attributes, such as race and gender. Towards achieving these goals, this work introduces \texttt{FERN}, a fair team formation approach that
    promotes mutually beneficial peer learning, dictated by protected group fairness as \textit{equality of opportunity} in collaborative learning.
    We formulate the problem as a multi-objective discrete optimization problem. We show this problem to be NP-hard and propose a heuristic hill-climbing algorithm. 
	Extensive experiments on both synthetic and real-world datasets against well-known team formation techniques show the effectiveness of the proposed method.
\end{abstract}

\begin{IEEEkeywords}
Team formation, Group fairness, Collaborative learning, Hill-climbing, Crowd-sourcing, Clustering, Partitioning.
\end{IEEEkeywords}

\IEEEpeerreviewmaketitle

\section{Introduction}
\IEEEPARstart{E}{xtensive} 
research has unveiled the benefits of collaborative learning through participation in group educational activities (e.g., study groups, team projects, etc.)~\cite{ashman2003cooperative, slavin1988cooperative, hertz2013learning, freeman2014active, bargh1980cognitive, hertz1995interaction}, especially when the members of the teams can learn something from their peers (peer learning)~\cite{hertz1995interaction, vygotsky1980mind, bargh1980cognitive, freeman2014active}.
This stands for traditional brick-and-mortar educational institutions, as well as online learning environments and massive open online courses. To facilitate the creation of such collaborative learning groups, \emph{team formation} algorithms are used to group students based on different factors and constraints. 
Since students have the opportunity to benefit through peer learning, it is important for the team formation algorithms to ensure that the students are treated fairly with respect to benefit from team work, irrespective of any protected attributes, such as gender and race.

There has been considerable research in developing methods to form teams in the educational domain, some of which try to account for different notions of fairness.
Some works implicitly address fairness when they aim to produce balanced teams with respect to the skills of their participants (inter-team homogeneity)~\cite{andrejczuk2019synergistic, bandyopadhyay2018divgroup, liu2016collaborative}, while others also account for diversity in personality and gender~\cite{andrejczuk2019synergistic}.
Outside the educational domain, recent work on fair clustering methods considers equal representation of the protected groups across the clusters~\cite{chierichetti2017fair, bera2019fair, ziko2019clustering}. 

In this work, we are motivated by the \emph{equality of educational opportunity}~\cite{sep-equal-ed-opportunity}, where every student should have equal educational opportunities irrespective of race, gender, socioeconomic class, sexuality or religion. We transfer this ideal of \emph{equality of opportunity} in the context of collaborative learning and we strive to ensure that all students benefit equally from peer learning, regardless any protected attributes.
Our goal is to create teams comprising students with the necessary set of skills for performing the target task, while maximizing the peer learning opportunities at both \textit{individual} and \textit{protected group} level. We consider cases where individuals can belong to different groups, based on a protected attribute such as race, demographics and socioeconomic status, and we focus on alleviating potential discrimination against the team members.
Our approach is formalized as a multi-objective discrete optimization problem that captures the aforementioned objectives. We formally prove its NP-hardness, and we propose \texttt{FERN}; a heuristic hill-climbing greedy algorithm to tackle it. We experimentally evaluate the performance of \texttt{FERN} on both synthetic datasets with different characteristics, and real-world datasets, against well-known team formation techniques. 
The results show the effectiveness of the proposed method on creating fair and beneficial teams.
To the best of our knowledge, this is the first work in the team formation domain that addresses group fairness in the form of equality of opportunity in collaborative learning.

\section{Problem Statement}
Given a set $\mathbb{S}$ of $N$ students and a target task with specific skill requirements, we are interested in assigning the students to teams such that, in addition to completing the target task, every member in a team benefits. 
Following existing conventions~\cite{hertz1995interaction, vygotsky1980mind, agrawal2014grouping, agrawal2017grouping}, we consider a student to benefit from participating in a team if there is at least one skill they can learn from their higher ability peers.
In this case, we can create teams such that even high performing members could benefit, as long as there is a peer who has higher ability in at least one skill.   
In our setting, the number of teams and their size are not fixed and depend on the input data and the target task.
We assume that a task requires multiple skills in order to be completed successfully. For example, different lower level courses could be required to complete a project in a higher level course.

\textbf{Skill requirements}.
A given target task with $k$ required skills is represented by a $k$-dimensional skill requirements vector, $\mathbf{r}=(r_1,\dots,r_k)$, whose $p$-th component, $r_p$, with $p\in \{1,\dots, k\}$, is the threshold that the students from each team need to reach collectively, in order to complete the task with respect to skill $p$.
Note here, that the requirements can be different. Large value for $r_p$ indicates that skill $p$ is of higher importance or that it requires higher ability from the students in order to complete the target task.
Student $i$ is represented by a $k$-dimensional vector, $\mathbf{s}^i = (s_1^i,\dots,s_k^i) $, whose $p$-th element, $s_p^i$, with $p\in \{1,\dots, k\}$, represents the student's ability for skill $p$. 
We consider a team $l$ to complete the task successfully when the aggregated ability of its members is greater than or equal to the corresponding skill threshold value, for all the $k$ skills, i.e.:
\begin{displaymath}
	\sum _{i \in \mathbb{T}_l} s^i_p \geq r_p, \forall p=1,\dots,k,
\end{displaymath}
where $\mathbb{T}_l$ is the set of students that belong to team $l$. 

\textbf{Individual benefit}.
A student $i$ benefits from a student $j$ when $j$ has higher ability in at least one skill. We formally define this as follows: 
\begin{definition}
    A student $i$ with skill vector $\mathbf{s}^i$ \textbf{benefits} from a student $j$ with skill vector $\mathbf{s}^j$, when there is at least one skill, $p \in \{1,\dots,k\}$, such that $s^j_p - s^i_p > \varepsilon$.
\end{definition}
We use a threshold $\varepsilon$ to distinguish among \textit{significant} and \textit{non-significant} differences in the skill values of the students.
Next, we define the benefit matrix, $\mathbf{B} \in \mathbb{R} ^{N \times N}$, with $B_{ij}=1$ if student $i$ benefits from student $j$, and $B_{ij}=0$, otherwise. 
Fig. \ref{fig:benefit_example} presents a toy example which highlights the definition of benefit. There are three students A, B, C, and their corresponding performance in three skills. If we group B with A then, A will learn skills 1 and 2 from B, and B will learn skill 3 from A. On the contrary, if we group B with C, student B will not benefit from C as B is performing better in every skill.
\begin{figure}[!t]
\centering
    \includegraphics[width=0.6\linewidth]{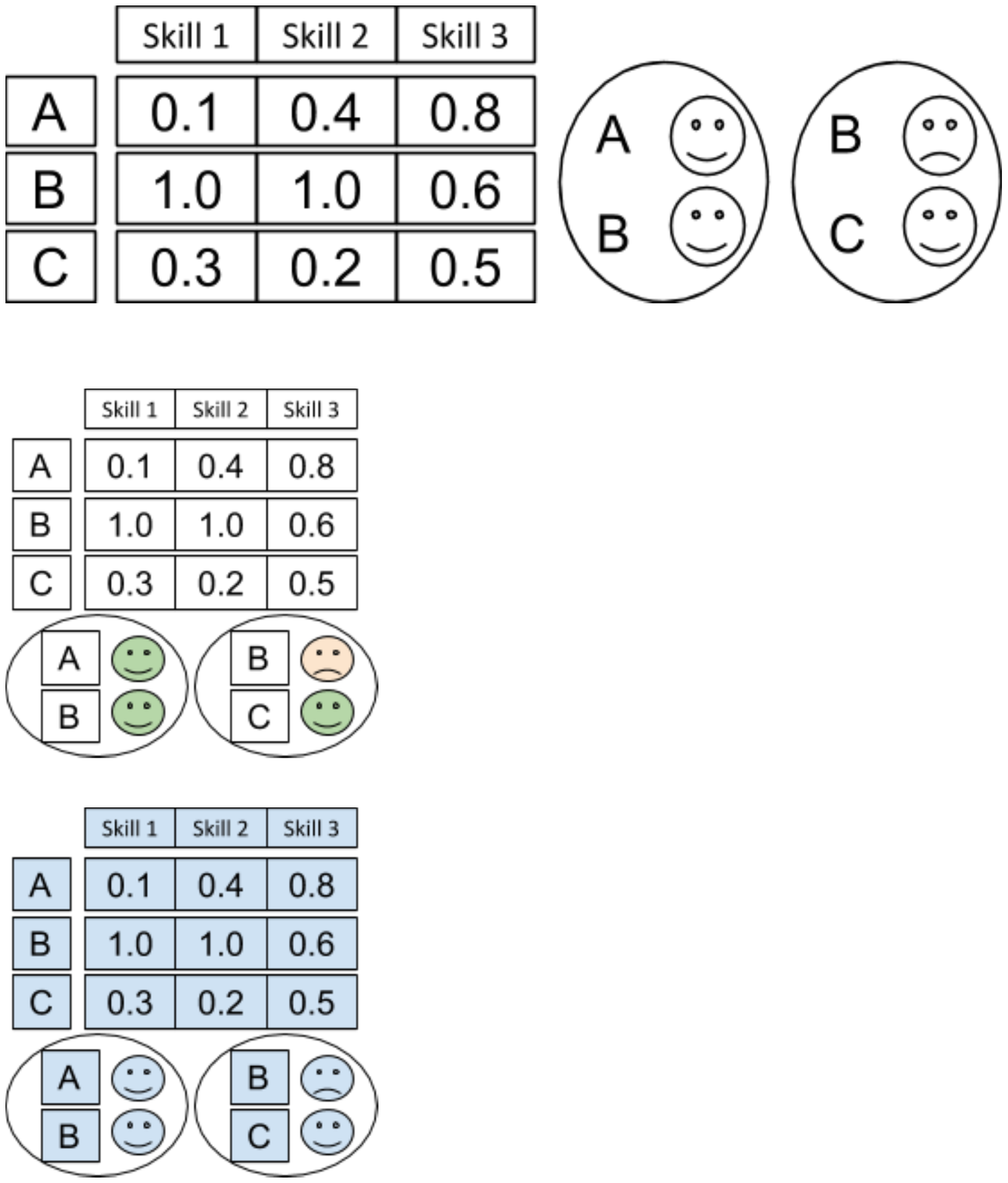}
    \caption{Example of the benefit of students A, B, C if we team them in two possible ways.}
    \label{fig:benefit_example}
\end{figure}

We then define the \textit{individual benefit} of student $i$ from their team, as the fraction of their teammates that $i$ benefits from, i.e.:
\begin{equation}
    \textit{IndBen}(i)= \frac{1}{|\mathbb{T}^i|}\sum_{j \in \mathbb{T}^i} B_{ij},
    \label{eq:indben}
\end{equation}
where $\mathbb{T}^i$ is the set of students that belong to the same team as $i$, excluding $i$. According to this definition, in order to increase the individual benefit of student $i$, we have to increase the number of teammates that $i$ benefits from. In other words, we aim to create teams in which the members benefit from as many teammates as possible.

\textbf{Group benefit}.
We assume that the students belong to $m$ different groups based on a protected attribute, such as race, demographics and socioeconomic status. The benefit of a protected group $q$, with $q\in \{1, \dots,m\}$, is the average individual benefit of its members, i.e.:
\begin{equation}
    \textit{GBen}(q)= \frac{1}{|\mathbb{G}_q|}\sum_{i \in \mathbb{G}_q} \textit{IndBen}(i),
    \label{eq:gben}
\end{equation}
where $\mathbb{G}_q$ is the set of students that belong to protected group $q$.
We define \textbf{group fairness} in the context of equality of opportunity in collaborative learning and we consider a team solution to be fair across the protected groups, if the different protected groups have approximately the same relative group benefit. In particular, for two protected groups $q_1$ and $q_2$, a team solution is fair when:
\begin{equation*}
    \textit{GBen}({q_1}) \approx \textit{GBen}({q_2})
\end{equation*}
If the students that tend to benefit the least disproportionately belong to one protected group, then this corresponds to an unfair team solution.

In summary, we aim to form teams that fulfill the skill requirements of the target task, maximize the individual benefit and impose group fairness. This problem is a combinatorial optimization problem, with a large discrete configuration search space.
Next, we show that this is an NP-hard problem.

\begin{theorem}
	The problem is NP-hard.
\end{theorem}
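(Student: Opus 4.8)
\emph{Proof proposal (sketch).} The plan is to give a polynomial-time reduction from a known NP-hard problem to the decision version of the problem: given the set $\mathbb{S}$ of students, their skill vectors, a partition of the students into protected groups, a requirement vector $\mathbf{r}$, a tolerance $\varepsilon$, and a target $\Theta$, decide whether there is an assignment of the students to teams such that (i) every team collectively meets $\mathbf{r}$ in all $k$ skills, (ii) the average individual benefit is at least $\Theta$, and (iii) the protected groups are (approximately) equally benefited. This decision problem lies in NP — a candidate team assignment can be checked against (i)--(iii) in time polynomial in $N$ and $k$ — so it suffices to prove NP-hardness, and the optimization problem inherits it.

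I would isolate the combinatorial core by (a) putting all students in one protected group, so that condition (iii) is vacuous and the multi-objective aspect disappears, and (b) setting $\Theta=1$, so that a team assignment is feasible only when every student benefits from every one of their teammates; condition (ii) then reads ``each team induces a clique in the benefit graph $\mathbf{B}$.'' The reduction would be from a graph-partition problem such as \textsc{Partition Into Triangles} — given a graph $H$ on $3m$ vertices, decide whether $V(H)$ can be split into $m$ triples each inducing a triangle — which is NP-complete. Students correspond to vertices of $H$; the requirement vector is used to force every team to have at least three members, by appending one extra ``counting'' skill in which each student has ability $1$ and the threshold is $3$ (abilities add, so $\sum_{i\in\mathbb{T}_l}s^i\ge 3$ forces $|\mathbb{T}_l|\ge 3$); and the remaining skill coordinates, together with $\varepsilon$, are chosen so that, restricted to these students, the benefit relation is symmetric and coincides with adjacency in $H$. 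A feasible team assignment then forces each team to be simultaneously a clique of $H$ and of size at least three, and one checks that this is equivalent to $H$ admitting a partition into triangles; both directions of the equivalence and polynomiality of the encoding are then routine.

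I expect two steps to be the real work. The first is the \emph{benefit encoding}: the matrix $\mathbf{B}$ cannot be prescribed arbitrarily, since $B_{ij}=1$ is governed by $\varepsilon$-thresholded coordinatewise comparisons of $\mathbf{s}^i$ and $\mathbf{s}^j$. One must place the vertex-students at points of $\mathbb{R}^k$ — with $k$ taken large enough, on the order of the cubicity of $H$ — so that non-adjacent vertices agree to within $\varepsilon$ in every coordinate, while adjacent vertices are separated by more than $\varepsilon$ in at least two coordinates and in \emph{opposite} orientations; the latter is what prevents a student from coordinate-dominating a teammate and thereby creating an unwanted one-directional benefit edge. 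The second is that the model only \emph{lower}-bounds team sizes, so oversized teams are not excluded by the counting skill alone; one closes this gap by additionally ensuring $\mathbf{B}$ contains no $K_4$ — equivalently, by starting from a $K_4$-free (still NP-complete) variant of \textsc{Partition Into Triangles} — so that no team of four or more can be all-benefiting and feasibility at $\Theta=1$ forces teams of size exactly three. A cleaner alternative worth trying is to keep condition (iii) active and reduce instead from \textsc{Partition} or \textsc{3-Partition}, letting the requirement $\textit{GBen}(q_1)\approx\textit{GBen}(q_2)$ enforce a balance; there the delicate point is turning the informal ``$\approx$'' into a precise, polynomially verifiable threshold.
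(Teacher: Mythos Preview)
Your route is genuinely different from the paper's. The paper ignores the benefit and fairness terms altogether and reduces the \emph{Dual Bin Packing Problem} to the one-skill special case in which the only goal is to maximize the number of teams whose aggregate skill meets the threshold $r$. Bins become teams, item sizes become the single skill values $s^i$, and the bin threshold becomes $r$; the reduction is essentially the identity map, so the argument is three lines. In effect the paper shows that the packing side of the problem is already NP-hard, morally at $\gamma=\delta=0$. Your plan does the opposite: you trivialize the skill requirements and locate the hardness in the benefit term at $\Theta=1$. If it goes through, that is arguably the more informative statement---it says the novelty of the objective, not just its bin-packing shell, is hard---but it is also substantially more work than what the paper actually does.

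Two technical comments on the proposal. First, the benefit encoding does not need anything as heavy as cubicity. For each edge $\{u,v\}$ of $H$ introduce two coordinates on which $u$ and $v$ take the values $0$ and $1$ in opposite orders while every other vertex takes $1/2$; with $\varepsilon\in(1/2,1)$ this realizes mutual benefit exactly on the edges of $H$, is symmetric by construction, and uses only $2|E(H)|+1$ coordinates. Second, the step ``start from a $K_4$-free, still NP-complete, variant of \textsc{Partition Into Triangles}'' is the place where your sketch is not yet a proof: you assert NP-completeness of that restriction but give no argument or citation. It can be rescued---\textsc{Partition Into Triangles} on $3$-partite inputs (hence $K_4$-free) is essentially the $3$-dimensional matching setting---but you should say so explicitly, because without that restriction a feasible team could be a $K_4$ or larger clique and nothing forces size exactly three. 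Your \textsc{3-Partition} fallback runs into the same one-sidedness issue, since the model only enforces $\sum_{i\in\mathbb{T}_l}s^i\ge r$; the paper's choice of Dual Bin Packing is convenient precisely because that problem is stated with the same one-sided threshold.
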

\begin{proof}
	In order to prove that our problem is NP-hard, we will reduce the Dual Bin Packing Problem (DBPP)~\cite{assmann1984dual} to a much simpler version of our problem.
	We consider the case where we have only one skill for each student and the goal is to assign the students to teams, such that, the number of teams which fulfill or exceed the skill requirements is maximized.
	We then map the bins to teams and the items to students. 
    The skill $s^i$ of student $i$ corresponds to the size $\alpha_i$ of item $i$ and the bin threshold $t$ corresponds to the skill requirement threshold $r$. We want to find the maximum number of teams, $L$, and a partition of the students to teams, $\mathbb{T}_1\cup \dots \cup \mathbb{T}_L$, such that $\sum_{i \in \mathbb{T}_l}s^i \geq r, \forall l=1,\dots,L$.
	Thus, the DBPP is reduced to our problem and if we can find a solution in polynomial time to our problem, then we can find a solution to DBPP as well. But we know this is a contradiction since DBPP is NP-hard~\cite{assmann1984dual}.
\end{proof}

\section{FERN: A Fair Team Formation Approach}

We formulate the problem of fair team formation as a multi-objective discrete optimization problem and we present a heuristic algorithm to solve it. We will use the name \texttt{FERN} to refer to the approach and associated algorithm. \texttt{FERN} starts by computing an initial solution and then refines this solution to gradually improve the objective function.

\subsection{Objective Function}
Our goal is threefold: create teams such that 1) the collective ability of each team is adequate to successfully complete the target task, 2) the number of students that benefit from the team work is maximized, and 3) impose fairness among the protected groups. 
Next, we explain how we proceeded in order to achieve each goal.

\textbf{Average skill deficiency}. 
A team completes the target task successfully, if its collective ability reaches or exceeds the skill requirements threshold. In order to tackle this goal, we measure the \textit{average skill deficiency} of a solution, which is given by:
\begin{equation}
    \mathcal{X} = \frac{1}{L k} \sum_{l=1}^L \sum_{p=1}^k \bigg(r_p - \min\big(r_p,\sum_{i\in \mathbb{T}_l} s_p^i\big)\bigg)^2,
    \label{eq:capacity_term}
\end{equation}
where $L$ is the number of teams of the solution.

\textbf{Average individual benefit}. In order to tackle our second goal and maximize the number of students that benefit from collaborating with others in the same team, we measure the \textit{average individual benefit} over the whole set of students $\mathbb{S}$:
\begin{equation}
   \mathcal{Y} =\frac{1}{|\mathbb{S}|} \sum _{i\in \mathbb{S}} \textit{IndBen}(i),
    \label{eq:avgindben}
\end{equation}

\textbf{Variance in group benefit}. In order to achieve group fairness with respect to benefit among the $m$ protected groups, we measure the \textit{variance in group benefit}, which is given by:
\begin{equation}
    \begin{aligned}
    \mathcal{Z} = \frac{1}{m} \sum_{q=1}^{m} \Big(\textit{GBen}(q) - \overline{\textit{GBen}} \Big)^2,\\
    \text{where } \overline{\textit{GBen}} = \frac {1} {m} \sum_{q=1}^{m} \textit{GBen}(q).
    \end{aligned}
    \label{eq:fairness}
\end{equation}{}

\textbf{Combined objective function}. At last, we combine Eq.~\ref{eq:capacity_term}, \ref{eq:avgindben}, \ref{eq:fairness} in a single multi-objective function and we solve the following minimization problem:
\begin{equation}
    \mathcal{F} = \mathcal{X} - 
    \gamma \mathcal{Y} +
    \delta \mathcal{Z},
    \label{eq:obj}
\end{equation}
where we want to minimize $\mathcal{X}$, $\mathcal{Z}$ and maximize $\mathcal{Y}$. We use parameters $\gamma$ and $\delta$ to control the relative importance of individual benefit and group fairness, respectively. 

The group benefit is the most novel component of our multi-objective function.
The motivation behind our definition of group benefit (Eq.~\ref{eq:gben}) and the way we measure group fairness (Eq.~\ref{eq:fairness}) is the following. 
The second term of the objective function (average individual benefit, $\mathcal{Y}$) maximizes the individual benefit of each student and implicitly ensures that the number of students that benefit is maximized. This means that the group benefit of each protected group will also increase.
Then, the third term of the objective function (variance in group benefit, $\mathcal{Z}$) additionally takes care of the balance in group benefit across the protected groups. This corresponds to improving the least benefited group, since the second term will penalize any decrease in the benefit. 
Ultimately, our combined formulation is similar to the one of maximizing the group benefit of the least benefited protected group. To back up the above arguments, we present experimental results in Subsection~\ref{sec:motivation}.

\subsection{Initial Assignment Algorithms}
\label{sec:initialAssignment}
We developed four different algorithms to compute the initial assignment of students to teams, which are used as the starting point for the refinement process.

\subsubsection{Global Most-Benefit First (\texttt{GMBF})} 
\label{gmbf}
This method creates one team at a time; it starts with the first team and keeps on adding students until the skill requirement threshold is reached for each skill, or there are no students left to select. Every time, it chooses student $i$ who has not yet been selected and has the highest individual benefit if all the students were teamed up in a single team, i.e., $i=\arg \max_i \sum_{j\in \mathbb{S}} B_{ij}$ - \textit{global} choice. 
This algorithm optimizes the average skill deficiency (the first term of the multi-objective function) and ensures that every team, except for probably the last one, will reach the threshold in order to complete the target task. 

\subsubsection{Local Most-Benefit First (\texttt{LMBF})} This method is similar to \texttt{GMBF}, with the difference now, that it chooses the student with the highest individual benefit based on the current team it creates - \textit{local} choice. 
The first student of each team is selected to be the one with the global lowest individual benefit out of those not yet selected. This algorithm accounts for the first two terms of the multi-objective function at every local choice it makes.

\subsubsection{Local Most-Benefit Fair First (\texttt{LMBFF})} This method extends the previous two and takes care of all terms of the multi-objective function. The difference compared to the two aforementioned algorithms is that, now the algorithm chooses to add to the current team, that student who optimizes the weighted summation of the average individual benefit and the variance in group benefit. Again, the first student of each team is selected to be the one with the global lowest individual benefit out of those not yet selected. 

\subsubsection{Random} \label{sect:Random} This corresponds to a random assignment and serves as a baseline to compare against the above more sophisticated proposed initial algorithms.  
It randomly assigns equal number of students to each team. If this is not possible, the remaining students are equally split in the first $N$ \verb|modulo| $L$ teams, where the number of teams $L$, is determined by \texttt{GMBF}.

\subsection{Refinement Algorithms}

We developed two hill climbing greedy approaches to perform the refinement process. They are best-first iterative algorithms that start from an initial sub-optimal solution, and attempt to find a better solution by making an incremental change at a time to the current solution. This is done repeatedly, until no further improvement can be found. 

The change consists of moving student $x$ from their current team $T_{s}$ to team $T_{d}$; at each step, we choose the triplet $(x, T_{s}, T_{d})$ that gives the highest gain (optimal available move).
The gain $\mathcal{G}_{x,T_{s},T_{d}}$ of a move $(x,T_{s},T_{d})$ is the reduction on the objective if we move student $x$ from their current team $T_{s}$ to team $T_{d}$, i.e.,  
	$\mathcal{G}_{x,T_{s},T_{d}} = \mathcal{F}_{x\in T_{s}}-\mathcal{F}_{x\in T_{d}}$.
The gain is positive if the move improves the value of the objective function, negative if the move deteriorates the solution, and equal to zero if the gain does not change.

The first algorithm \texttt{SAHC}, is a greedy steepest-ascent hill climbing, that searches the whole space of possible moves and performs the best one which leads to a \textit{strictly positive} gain; if no such move exists, it terminates. The pseudo code is presented in Algorithm \ref{alg:sahc}. 

The second algorithm is based on the Fiduccia-Mattheyses (FM) algorithm~\cite{fiduccia1982linear}; an iterative heuristic for hyper-graph bi-partitioning. This FM-based hill climbing, \texttt{FMHC}, is different from the steepest-ascent as it allows for uphill moves, i.e., moves that do not improve the value of the objective function, and even worsen it. 
It may perform \textit{bad} moves with the goal of escaping from \textit{local minima}. 
In particular, at each pass, \texttt{FMHC} examines all possible moves and greedily chooses the one with the optimal gain (it could be a negative one), until there are no more students to move. 
After it makes a move, it \textit{locks} the student and does not consider them again in the same pass. It considers again the locked students in the next pass, and only then, it updates their gains. Next, \texttt{FMHC} finds the number of subsequent moves $z$, from the beginning of the current pass, that maximizes the sum of the gains of the $z$ moves, $gain_{max}$. This translates to identifying a set of $z$ concurrent moves (including potential non-beneficial single moves) that altogether will lead to a better solution. If $gain_{max}$ is larger than $\epsilon$, then it accepts the $z$ moves, updates the assignment of students to teams and improves the solution. If $gain_{max}$ is smaller than or equal to $\epsilon$, then the refinement terminates. 
The pseudo code is presented in Algorithm \ref{alg:fm}.

For both algorithms, we implemented a max-heap priority queue to store the gains efficiently. During the refinement, the algorithms are allowed to empty a team and consequently, reduce the number of teams, if this leads to a better solution. At the end of the refinement process, we perform a post-processing step to eliminate teams with only one student; we assign the student to the best team, based on the value of the objective function.   

\begin{algorithm}[bt] 
	\caption{Steepest Ascent Hill Climbing (\texttt{SAHC})}
	\label{alg:sahc}
	\begin{algorithmic}[1]
		\Require initial assignment $T$, initial number of teams $L$.
		\State $pq \gets$ compute gains \Comment{priority queue with gain values for all possible moves}
        \While {1}
		\State $\textit{gain, student, }team_{s}\text{, }team_{d} \gets pq.pop()$ \Comment{move with the maximal gain}
		\If {$gain > 0$} 
		\State $\text{Perform the move and update } T$
		\State{Update gains $\forall student_i \in S$}
		\Else 
		\State $\textbf{break}$
		\EndIf
		\EndWhile
		\State $\text{Remove empty teams from } L$. Check for teams with only one student and find the best other team to assign this student to.
		\State \Return $T, L$
	\end{algorithmic}
\end{algorithm}

\begin{algorithm}[bt]
	\caption{FM-based Hill Climbing (\texttt{FMHC})}
	\label{alg:fm} 
	\begin{algorithmic}[1]
		\Require initial assignment $T$, initial number of teams $L$.
		\State $pq \gets \text{[ ]}$ \Comment{priority queue that stores the gain values for all possible moves}
		\While {$1$}
		\State{$pq \gets$ compute gains}
		\State{$locked \gets set()$, $T_{s} \gets T$, $moves \gets \text{[ ]}$}
		\While {$pq$} \Comment{while there are students to move}
		\State $\textit{gain, student, }team_{s}\text{, }team_{d} \gets pq.pop()$ \Comment{move with the maximal gain}
		\State $locked.add(student)$
		\State{Remove from $pq$ all entries that correspond to $student$}
		\State $\text{Update }T_{s} \text{ as if we make the move}$
		\State{Update gains $\forall student_i \in (S-locked)$}
		\State{$moves.append([[student, source, dest], gain])$}
		\EndWhile
		\State $z \gets 0$, $gain_{max} \gets -\inf$
		\State{Find \textit{z} that maximizes $gain_{max}$, the sum of $moves[1][1],\dots, moves[z][1]$}
		\If {$gain_{max} > \epsilon$} 
		\State $\text{Update } T \text{ with the } z \text{ moves}$
		\Else 
		\State $\textbf{break}$
		\EndIf
		\EndWhile 
		\State $\text{Remove empty teams from } L$. Check for teams with only one student and find the best other team to assign this student to.
		\State \Return $T, L$
	\end{algorithmic}
\end{algorithm}

\textit{Time Complexity Analysis.}
The most computationally expensive part in both the refinement algorithms is the update of the gains in the priority queue after we make a move. In the priority queue, we store in total $N\times (L-1)$ moves for the $N$ students and $L$ teams. 
The update of all the gains costs $O(N L \log (N L))$. Thus, the time complexity of \texttt{SAHC} for each iteration of the \texttt{while} loop is $O(N L \log (N L))$. For \texttt{FMHC}, the overall time complexity of one pass is $O(N^2 L \log (N L))$, since the inner \texttt{while} loop (line 5) is executed $N$ times.

\section{Related Work}
\label{sect:RW}

Agrawal et al.,~\cite{agrawal2014grouping} were the first to introduce a computational approach for team formation in the educational setting. 
They form teams of \textit{leaders} and \textit{followers} (those with ability above and below the team average, respectively) to maximize the total gain.
They measure the gain as the number of students that can increase their ability by interacting with leaders. 
Another work by Agrawal et al.,~\cite{agrawal2017grouping} is built upon the intuition that every student can increase their benefit up to a reference score and extends the previous work~\cite{agrawal2014grouping} which assumes that only students below the team mean can benefit. In such formulation, every student could potentially increase their gain except for the best performing students of each team.
Both these works consider one-dimensional skill vectors, do not incorporate the notion of the skill requirements and are limited to maximize the benefit of a subset of students.

Bandyopadhyay et al.,~\cite{bandyopadhyay2018divgroup} propose a method based on the Modularity Optimization technique, used in community detection to group a collection of objects. Their goal is to maximize the intra-team diversity and minimize the inter-team diversity, by maximizing the pairwise distance between the individuals. 
Liu et al.,~\cite{liu2016collaborative} propose a method to extract features for students' skill proficiency by performing a cognitive analysis, and use these features to create the students' vectors. They also propose the \textit{uniform $k$-means} method for creating a fixed number of teams $K$, that maximizes collaborative learning. This approach uses a variant of $k$-means clustering method to create $k$ equal-sized clusters based on the students' vectors and then, randomly distribute the students of each cluster to $K$ teams. The goal is to create heterogeneous teams by maximizing an objective based on dissimilarities.
Methods like these, which promote the intra-team distances, could end up hurting both the individual-level and potentially the protected group-level benefit (unfair solution): the high ability students will be teamed up with lower ones, and the former will not benefit.   
Our objective function is entirely different from the aforementioned techniques, and the proposed approaches cannot be directly applied to solve our problem.

Bahargam et al.,~\cite{bahargam2019guided} defined the Guided Team-partitioning problem; given a set of points, they want to partition them into teams such that the centroid of each team is as close as possible to a specific target vector. They consider multiple target tasks with different requirements. 
This work is different from ours in that there are multiple target tasks and the team size and number of teams are fixed. The major difference though is that it allows for the removal of students out of the assignment solution. This is not allowed in our problem and cannot be applied in real-world educational settings.

Other works take into account the social network of the members in addition to the skill values, such that the communication cost is minimized and they are interested in forming a single team for the target task~\cite{lappas2009finding, scarlat2011genetic, niveditha2017genetic}.
Andrejczuk et al.,~\cite{andrejczuk2019synergistic} consider the problem of forming \textit{synergistic} teams, where each team is balanced in terms of ability, personality, and gender. The final equal-sized teams are diverse (equal representation) and of equal performance. One of the proposed methods is a local search algorithm that is similar to our hill-climbing method, but with significant differences: 1) their approach is not greedy, 2) it performs different kind of moves than ours, and 3) it allows for only a limited number of consecutive bad moves. Finally, they optimize a different objective function as they do not account for skill requirements or fairness in the form of equality of opportunity (we point out that the latter stands for all the related work), which makes this method unsuitable to solve our problem.
Machado et al.,~\cite{machado2019fair} also account for fair teams, but in the notion again of forming teams of equal performance: the best-skilled members are equally distributed across the teams. Another difference with our work is that, even though they too build their problem upon members' skills and project requirements, they do not use numerical values; instead, they use text tags and assume that a single member can fulfill one skill. Last, they target multiple tasks and assign the best team to each task. 

Genetic algorithms have also been proposed for solving the team formation problem~\cite{moreno2012genetic, scarlat2011genetic, niveditha2017genetic, mishra2019worker}. They are used for optimization problems and they incorporate the idea of evolution. They model the solutions as chromosomes and apply recombination operations (crossover and mutation) in order to preserve only the necessary information that will lead to a better solution~\cite{moreno2012genetic}.

\textit{Crowd-sourcing}.
Another related field is that of the standard crowd-sourcing. Within that context, given a set of workers with specific skills and cost, the goal is to form a single team for a target task.
Mishra et al.,~\cite{mishra2019worker} study a variation of the problem which does not include any cost on the workers' side. They use the notion of \textit{dominance} to form a ranking of workers out of a pool of available workers, in order to choose the best ones for the target task, that are non-dominated with each other, i.e., they are equally preferable. 
Their definition of dominance is related to our definition of benefit; in our context, we want to minimize the dominance in order to increase the benefit. However, 
we are interested in allocating all the individuals to teams, so that the task's skill requirements is reached.
Yadav et al.,~\cite{yadav2017concurrent} study the problem of splitting the workers into teams that satisfy some requirements, but each team is assigned a different task. The minimization objective is related to each worker's cost.

\section{Experimental Evaluation}

\subsection{Datasets}

We evaluated the performance of \texttt{FERN} on two sets of datasets. The first is a set of synthetically generated datasets of different size, complexity, and solution difficulty. The second is a set of datasets that we derived from actual students' course registration and performance data. Details on how we generated the synthetic and real-world datasets are provided next.

\subsubsection{Synthetic Datasets}
We developed a synthetic dataset generation method with five different parameters: (i) the number of students $N$, (ii) the dimension of the skill vectors $k$, (iii) the number of protected groups $m$, (iv) a vector of length $m$ specifying how the $N$ students are distributed across the $m$ groups and, (v) $m$ sets of parameters that control how the skill vectors for students of each protected group are generated.  
Next, we describe the last type of parameters, (v).

We assume that the completion of a target task requires knowledge of specific courses and the ability of a student to complete the task is represented by the corresponding grades (skill values). 
We generate the data in such a way that the students fall into four \textit{grade buckets} reflecting their overall average performance: $A$, $B$, $C$, and $D$.
Based on our generation model, for each protected group, we follow two steps: first, we distribute the students to the grade buckets, and second, we generate the skill vectors of each grade bucket. 

In the first step, we want to create the grade buckets' distributions. Our goal is to generate datasets of progressively increased difficulty. The shape of these distributions affects the difficulty of balancing the generated dataset with respect to benefit. 
When the distribution of the students to the grade buckets is different across the protected groups, then students from different groups have different performance. For example, in the case of two protected groups, most of the students of the first group could belong to bucket $A$ and most of the students of the second group to bucket $C$. In such cases, it is harder for the algorithm to impose group fairness.
To this end, we sample from a beta distribution and we create multiple datasets of increasing difficulty by varying the values of $\alpha$ and $\beta$ parameters.
We chose beta distribution as it is able to model grade distributions~\cite{arthurs2019grades}, can produce several different shapes, and is suitable to model random percentages. By definition, the beta distribution is defined on the interval $[0,1]$; in order to correspond the sampled values to the four grade buckets, we discretize $[0,1]$ to four bins. 
Table~\ref{tab:datasets} summarizes the statistics of the generated data for two protected groups.
Fig.~\ref{fig:datasets_distr} presents the shapes of the distributions for the synthetic data when the number of protected groups is equal to $2, 3, 4$. The different protected groups are of equal size.
Dataset D1 is considered an easy dataset, as the different protected groups have the same bucket distributions, i.e., the students of different protected groups have similar performance.
Dataset D2 is of middle difficulty, as the bucket distributions of the two groups are slightly different from each other. Finally, dataset D3 is the hardest one and represents an extreme case, where the two distributions are opposite with each other.
In this case, it will be very difficult to impose group fairness, as there will be many high-skilled (bucket $A$) members of one protected group who are difficult to benefit from the lower-skilled (bucket $D$) members of the other group. 

\begin{table}[!t]
    \renewcommand{\arraystretch}{1.3}
    \begin{threeparttable}
	\caption{Statistics of the synthetic datasets for two protected groups.}
	\label{tab:datasets}
 	\setlength{\tabcolsep}{0.35em}
 	\centering
	\begin{tabular}{lrrrrrr||rrrrrr}
		\hline
		& \multicolumn{6}{c}{Group 1}&\multicolumn{6}{c}{Group 2} \\
		\hline
		Dataset & $A$ & $B$& $C$& $D$ & $\alpha$ &$\beta$ & $A$ & $B$ & $C$ & $D$ & $\alpha$ &$\beta$ \\
		\hline
		D1 & 16.2 & 57.0 & 25.7 & 1.1 &6.0 &4.0 & 16.2 & 57.0 & 25.7 & 1.1 &6.0 &4.0\\		
		D2 & 42.3 & 51.0 & 6.6 & 0.1& 8.0 &3.2 & 7.4 &	59.4 & 32.2 & 1.0& 7.0 &5.5 \\
		D3 &87.7 &11.8 &0.5 &0.0 & 7.5 &1.0 &0.0 &0.6 &11.1 &88.3&1.0 &7.5 \\
		\hline
	\end{tabular}
	\begin{tablenotes}
	\item $A, B, C, D$ are the grade buckets. The values of each bucket correspond to average percentages (\%) of the students that belong to each particular bucket. $\alpha$, $\beta$ are the parameters of the beta distribution used for each dataset. The two protected groups are of equal size. 
	\end{tablenotes}
	\end{threeparttable}
\end{table}
\begin{figure}
    \centering
    \includegraphics[width=\linewidth]{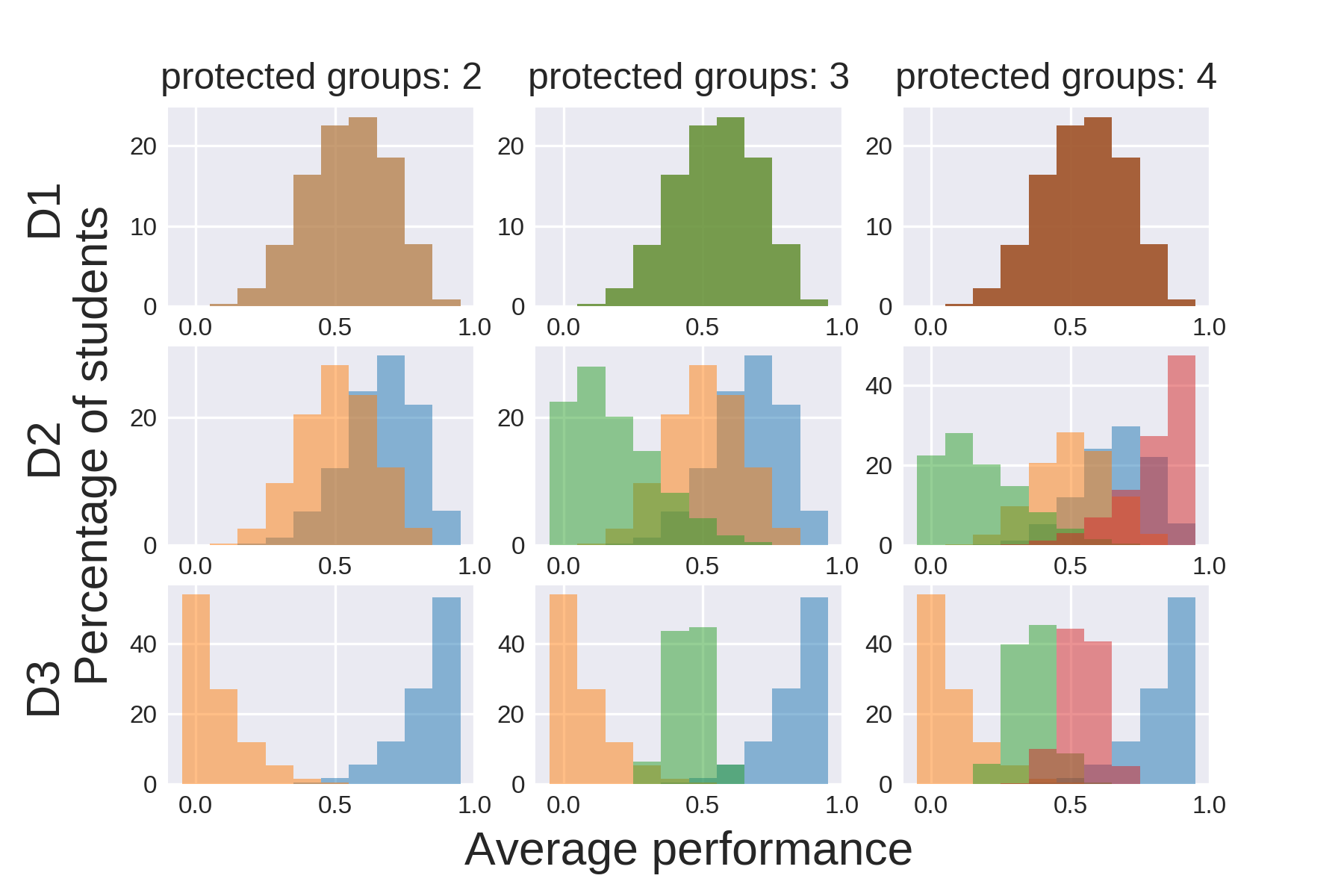} 
    \caption{Synthetic data - distribution of the students to the four grade buckets based on their average performance.}
    \label{fig:datasets_distr}
\end{figure}

In the second step of the data generation process, we generate the students' skill vectors of each grade bucket.
Since the skill values correspond to course grades, we map the letter grades $A-D$ to numeric ones. 
Then, we sample values from a normal distribution and we choose the mean to be the middle point of the corresponding numeric grade set of each grade bucket: 3.85 for $A$, 3.0 for $B$, 2.0 for $C$ and 1.15 for $D$. In all cases, we set the variance equal to $0.1$. Last, we normalize the generated skill vectors, so that the values belong to the interval $[0,1]$.

\subsubsection{Real-world Datasets}
We collected data from three departments of a large public university: Mathematics (MATH), Computer Science (CSCI), and Mechanical Engineering (MENG). The data include the grades of undergraduate students and span a period of 12 years.
We selected the students who had received an A-D grade in the $k$ most frequently taken courses from each major. 
We used the corresponding grades as the skill values for each student. Next, we transformed the letter grades to numerical values and normalized them in the range $[0,1]$. We used 2, 4, and 6 courses, i.e., $k=2,4,6$. In order to be able to explore what is the effect of the different values of $k$ on the solution, we had to make sure that the set of students will be the same as we change $k$. Thus, we selected the students obtained when $k=6$ (which is a subset of the students when $k=2,4$) and used those for the other values of $k$ as well.
For CSCI and MENG, where we had a large number of students, we selected the 400 most recent students, out of those that had taken the 6 most popular courses. For MATH and $k=6$, we had 355 students. 

The available data did not include any protected attributes.
In order to assign the students to protected groups, we used the entry registration status and the number of credits transferred to simulate the \textit{socioeconomic status} of the students. 
Our motivation came from the following findings.
High school students can take Advanced Placement (AP) courses and transfer the credits earned to their undergraduate program.
Minorities and low-income students are underrepresented in AP classes, and only a low percentage of them actually take and pass the AP exams every year~\cite{tugend2017benefits,whiting2009multi}.
Regarding students coming from other institutions, we did not have information about the institution they transferred from. However, reports statistically show that almost half of these students come from 2-year colleges~\cite{shapirotracking}, which are considered a major access point to 4-year institutions for minority and low-income students~\cite{crisp2014understanding}.
Finally, our analysis led to three protected groups: high school students with less than $15$ credits transferred (HS), high school students with more than $15$ credits transferred (HSAP), and those coming from other institutions/colleges (NAS). 
Table~\ref{tab:realdatasets} presents the statistics of the datasets for two courses ($k=2$). Fig~\ref{fig:real_distr} presents the distribution of the students to the four grade buckets based on their average performance in two courses. 

\begin{figure}[!t]
    \centering
	\begin{subfigure}{0.16\textwidth}
		\centering
		\includegraphics[width=\linewidth]{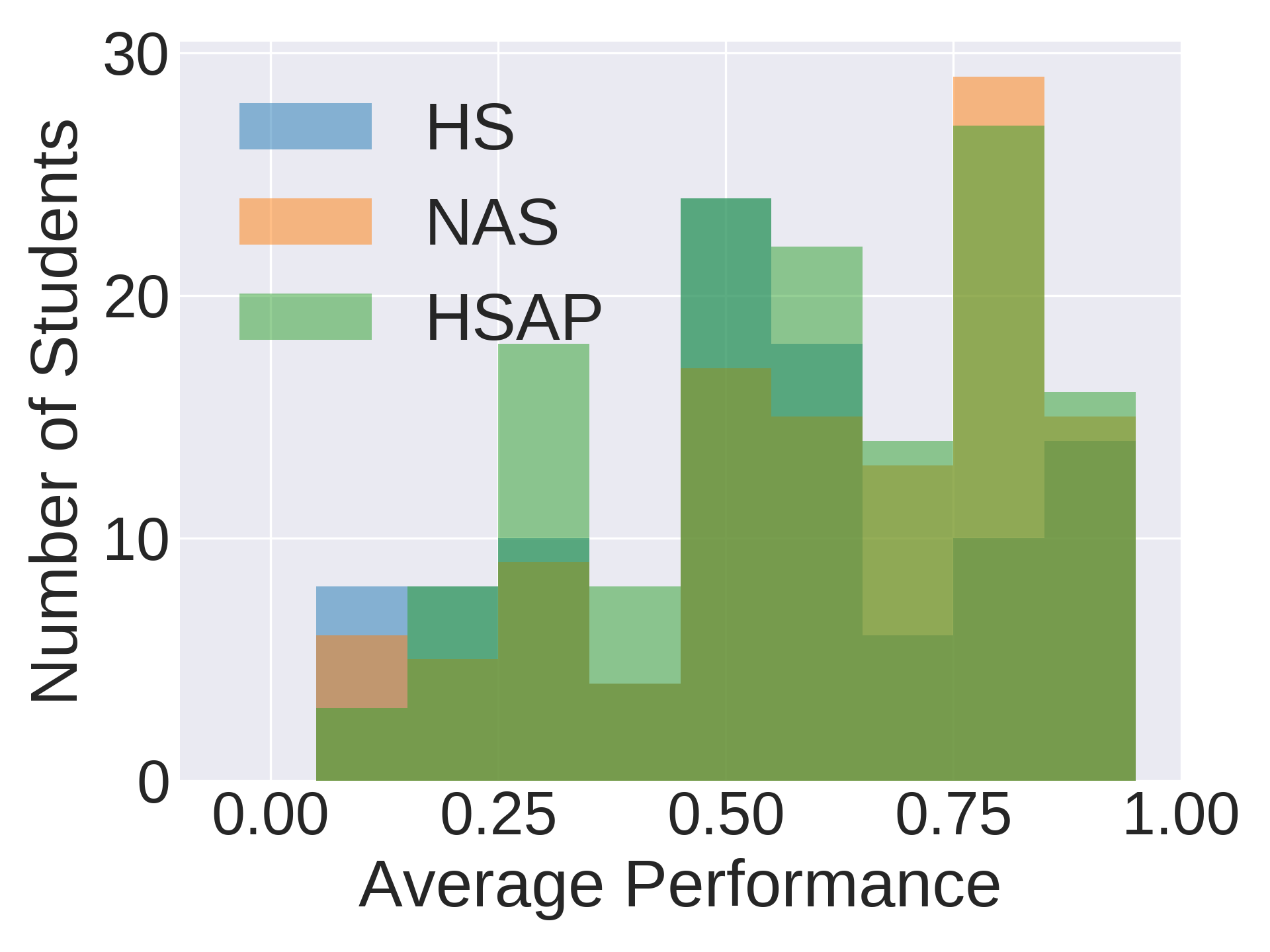}
		\caption{MATH}
		\label{fig:math_distr}
	\end{subfigure}%
	\begin{subfigure}{0.16\textwidth}
		\centering
		\includegraphics[width=\linewidth]{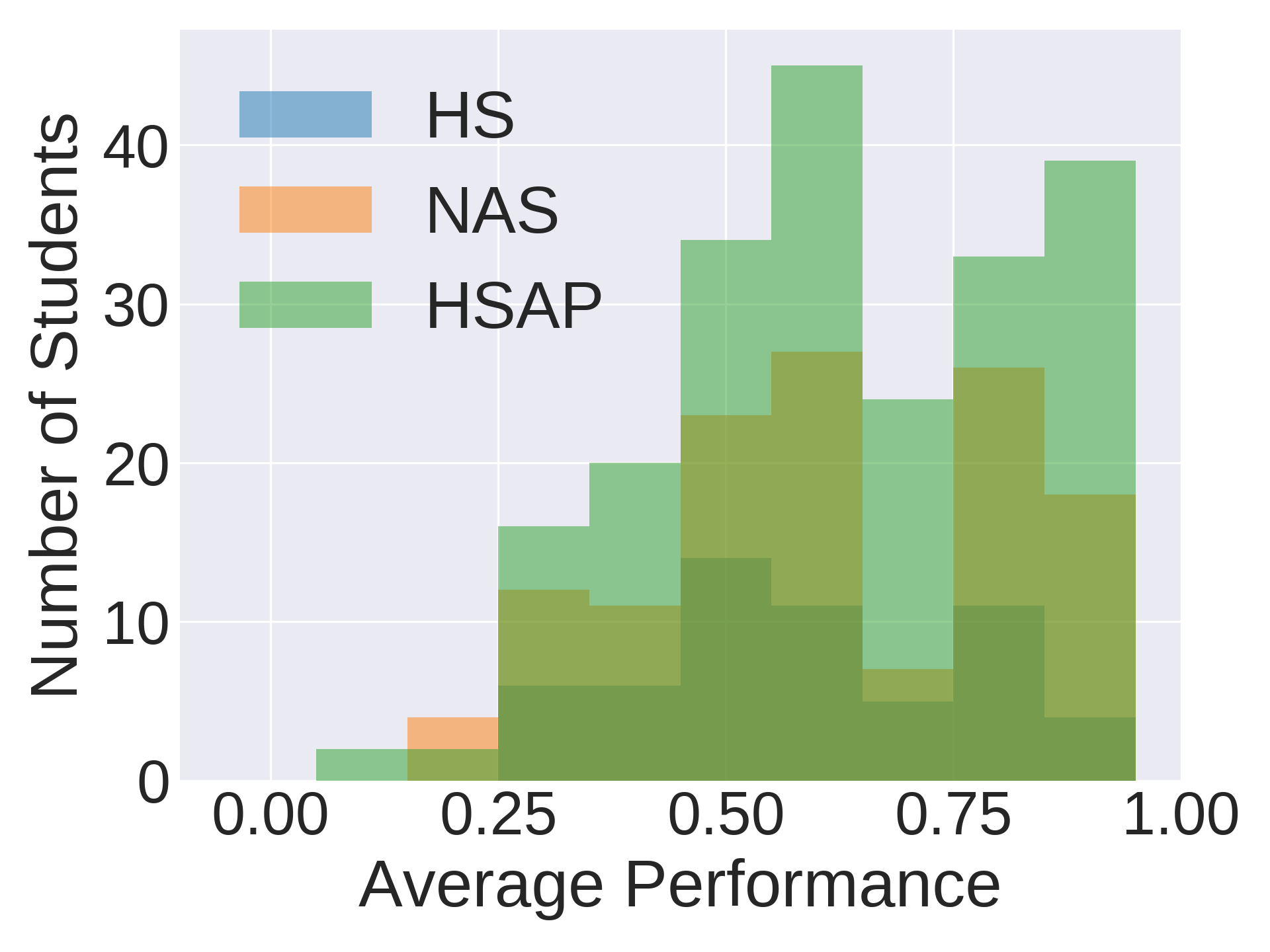}
		\caption{CSCI}
		\label{fig:csci_distr}
	\end{subfigure}%
	\begin{subfigure}{0.16\textwidth}
		\centering
		\includegraphics[width=\linewidth]{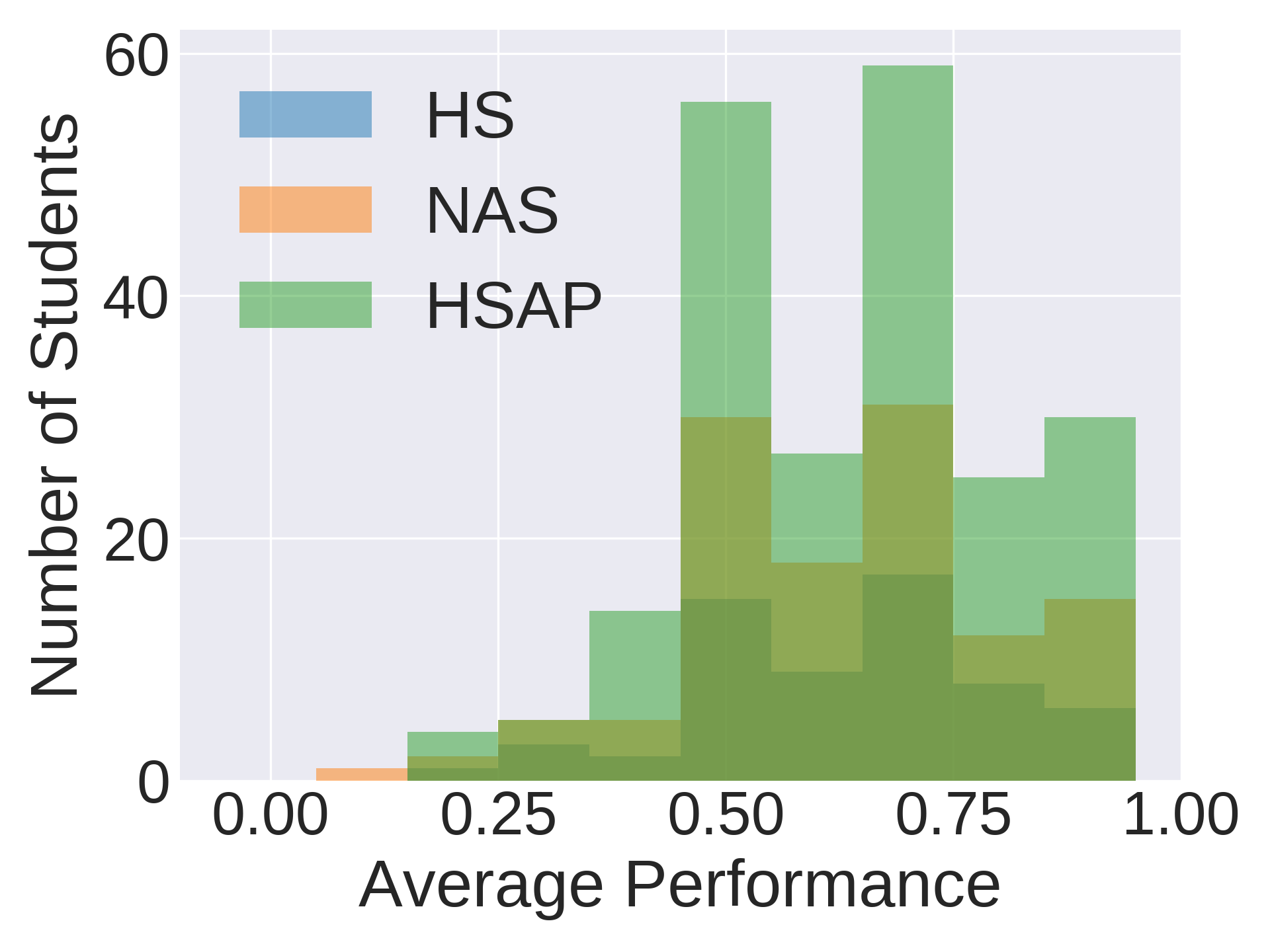}
		\caption{MENG}
		\label{fig:meng_distr}
	\end{subfigure}
	\caption{Real data - distribution of the students to the four grade buckets based on their average performance in two courses ($k=2$).}
	\label{fig:real_distr}
\end{figure}

\begin{table}[!t]
    \renewcommand{\arraystretch}{1.3}
    \begin{threeparttable}
	\caption{Statistics of the real-world datasets.}
	\label{tab:realdatasets}
	\setlength{\tabcolsep}{0.27em}
	\centering
	\begin{tabular}{lr|rrrrr|rrrrr|rrrrr}
		\hline
		\multicolumn{2}{c}{}& \multicolumn{5}{c}{HS} & \multicolumn{5}{c}{NAS} & \multicolumn{5}{c}{HSAP} \\
		\hline
		Dataset & $N$ &$N_1$ & $A$ & $B$& $C$& $D$ &$N_2$ & $A$ & $B$ & $C$ & $D$ & $N_3$ & $A$ & $B$ & $C$& $D$ \\
		\hline
		MATH & 355  & 102 & 30 & 42 & 22 & 8 &113 & 57 & 32 &18 & 6 & 140 & 57 &	46 & 34 & 3\\
		CSCI & 400  & 57 & 20	&25	&12	&0  &128	&51	&50	&27	&0  & 215 &96	&79	&38	&2 \\
		MENG & 400  & 61 & 19	&36	&5	&1  &119 &43	&63	&12	&1  & 220 &87	&110	&23	&0 \\
		\hline
	\end{tabular}
	\begin{tablenotes}
	\item $N$ is the total number of students and $N_1,N_2,N_3$ the number of students that belong to groups HS, NAS, and HSAP, respectively. $A, B, C, D$ are the grade buckets.
	\end{tablenotes}
    \end{threeparttable}
\end{table}

\subsection{Baseline Algorithms}
\label{sec:baselines}
We found that none of the available team formation algorithms that we listed in Section~\ref{sect:RW} can be applied directly to solve our problem. This is because none of them accounts for the equality of opportunity in collaborative learning, as we do. As a result, they optimize an entirely different objective function.
To this end, we implemented the following baseline approaches to compare against our proposed method \texttt{FERN}.

    \texttt{Random}. This method is described in Subsection~\ref{sect:Random} and corresponds to a random assignment of the students to teams. 
    
    \texttt{Uniform $k$-means (Umeans)}~\cite{liu2016collaborative}. This method (described in Section~\ref{sect:RW}) creates heterogeneous equally-sized teams by maximizing a dissimilarity-based objective. We create $\left \lceil{N/L}\right \rceil$ clusters, where $L$ is the number of teams generated by \texttt{GMBF}. This approach does not account for group fairness.
    
    \texttt{Genetic algorithm (GA)}. This approach is motivated by the genetic algorithms proposed for the team formation problem and described in Section~\ref{sect:RW}, and it is adapted to directly optimize our multi-objective function, Eq.~\ref{eq:obj}. We model the chromosome/solution as a vector of size $N$, i.e., the number of genes is $N$. We set the following parameters: mutation probability $= 0.1$, population size $ = 200$, number of generations $ = 300$, uniform crossover and swap mutation. For the number of teams $L$, we use the one that \texttt{FERN} identifies.
    
    \texttt{GMBF}. This algorithm is described in Subsection~\ref{gmbf} and is one of the algorithms that we experimented with to get an initial assignment. We do not use refinement process afterwards; this corresponds to a method like \texttt{FERN} without the refinement. This is an algorithm that optimizes the average skill deficiency (first term of the objective) and ensures that every team will reach the threshold, except potentially for the last one.

\subsection{Experimental Setting}
We try the following values for the importance parameters $\gamma \text{ and } \delta$: $\{0, 0.2, 0.5, 1, 2, 3\}$  
The gain threshold that we use as the convergence criterion for \texttt{FMHC} is set to $\epsilon=10^{-4}$. We also tried smaller values of $\epsilon$, i.e., $\epsilon=10^{-5}$ and $\epsilon=10^{-6}$, but the gains in performance were negligible, whereas the running time was significantly increased, as the algorithm performed significantly more work until convergence.
We run experiments for different number of students, i.e.,  $N=50,100,200,400$ for the synthetic datasets D1, D2, and D3. 

Unless otherwise stated, we use the following set of parameter values: number of dimensions for the skill vectors $k=2$, skill requirements threshold vector $\mathbf{r} = \mathbf{2}$, $\gamma = \delta =1$ which corresponds to equal importance among the three terms of the objective function, $\varepsilon=0$ for the benefit threshold, number of protected groups $m=2$ for D1, D2, D3 and $m=3$ for MATH, CSCI, MENG. For \texttt{Umeans} and \texttt{GA}, we run each experiment 10 times and we report the average performance. 
Finally, for each synthetic dataset, we run $50$ experiments with different sampled data for the beta and normal distributions, generated by different $50$ fixed for each particular dataset seeds, and we report the mean and standard error values.

\section{Results}

\subsection{FERN's Optimization Algorithms}

In order to explore how the different initial assignments, as well as the different refinement algorithms affect the final solution, we run experiments for both the refinement algorithms, starting from all four different initial algorithms presented in Subsection~\ref{sec:initialAssignment}. We evaluate the performance of the algorithms based on the final value of the objective function. The results are presented in Fig.~\ref{fig:obj_final}. 

\begin{figure}
    \centering
    \setlength{\tabcolsep}{0.001em}
    \begin{tabular}{ccc}
     \includegraphics[width=43mm]{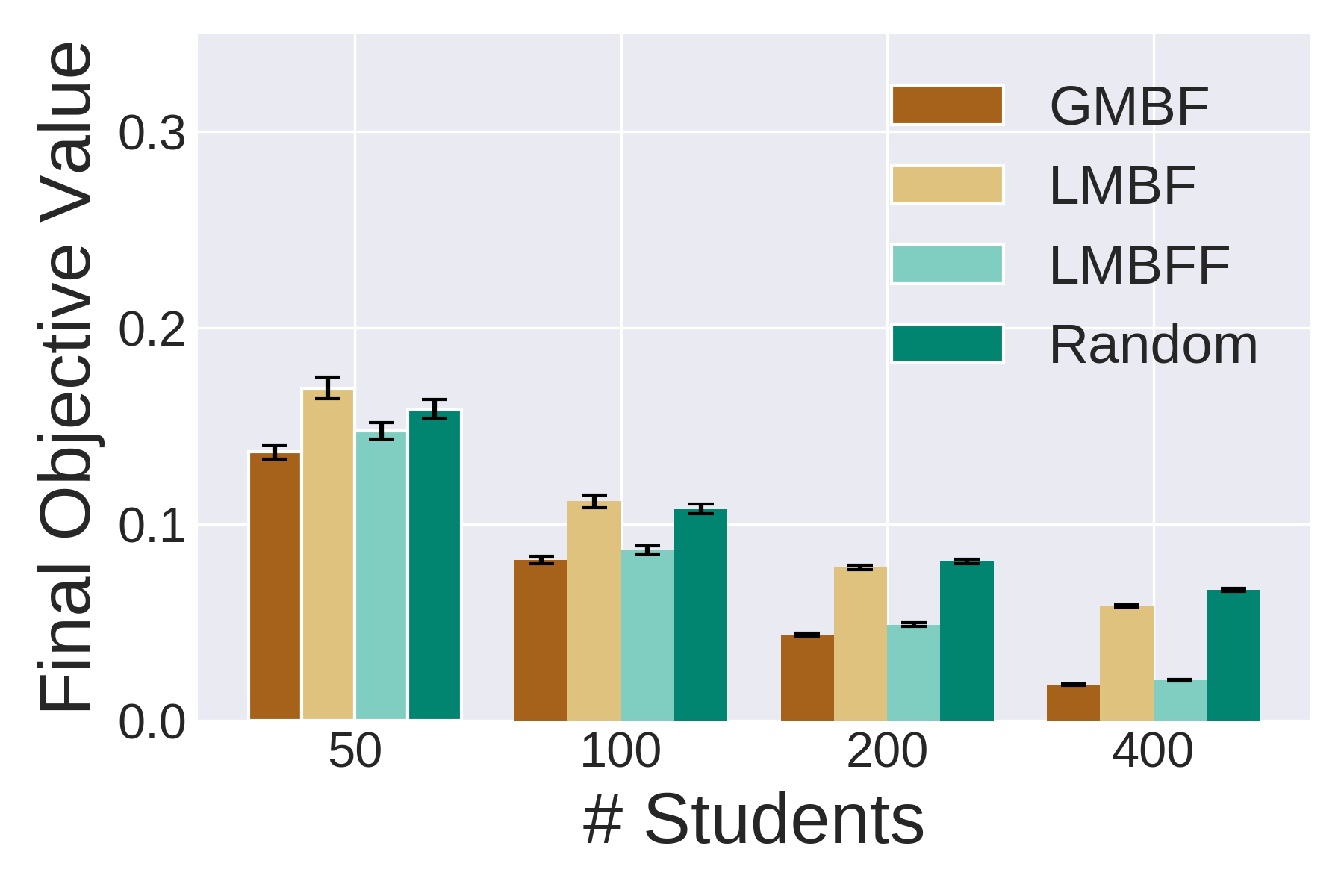} &
     \includegraphics[width=43mm]{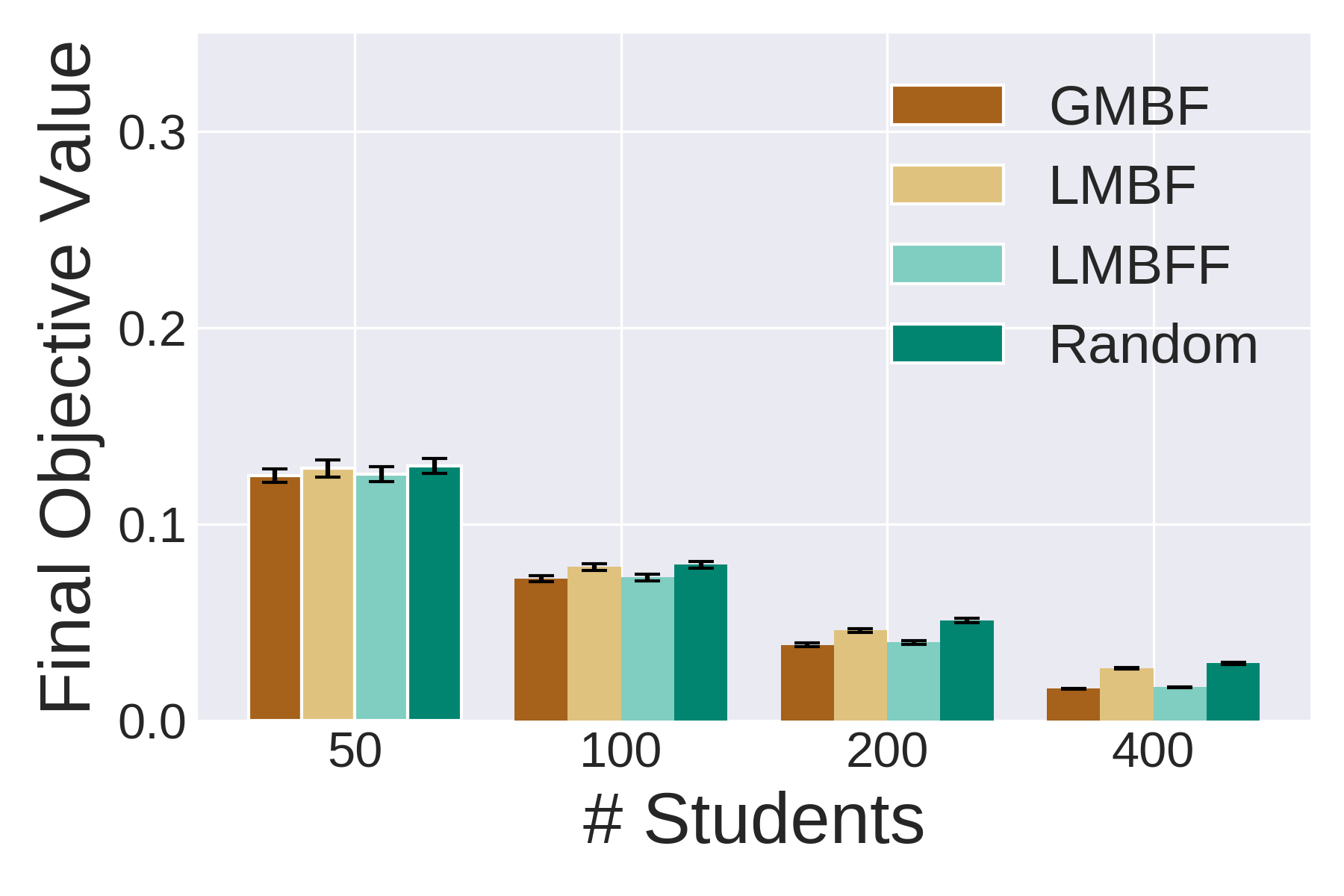} \\ 
     (a) D1 - \texttt{SAHC} & (d) D1 - \texttt{FMHC} \\
     \includegraphics[width=43mm]{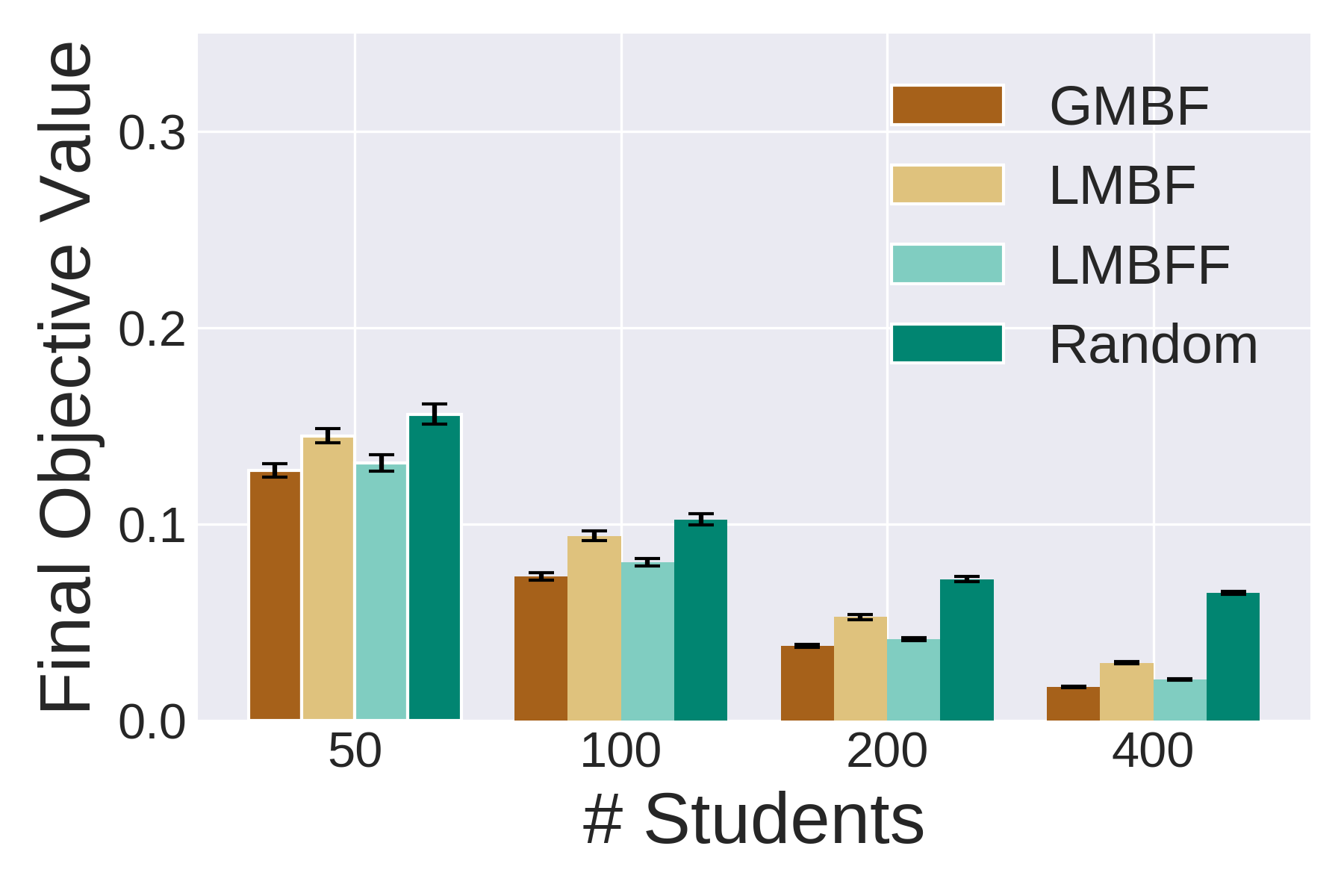} &
     \includegraphics[width=43mm]{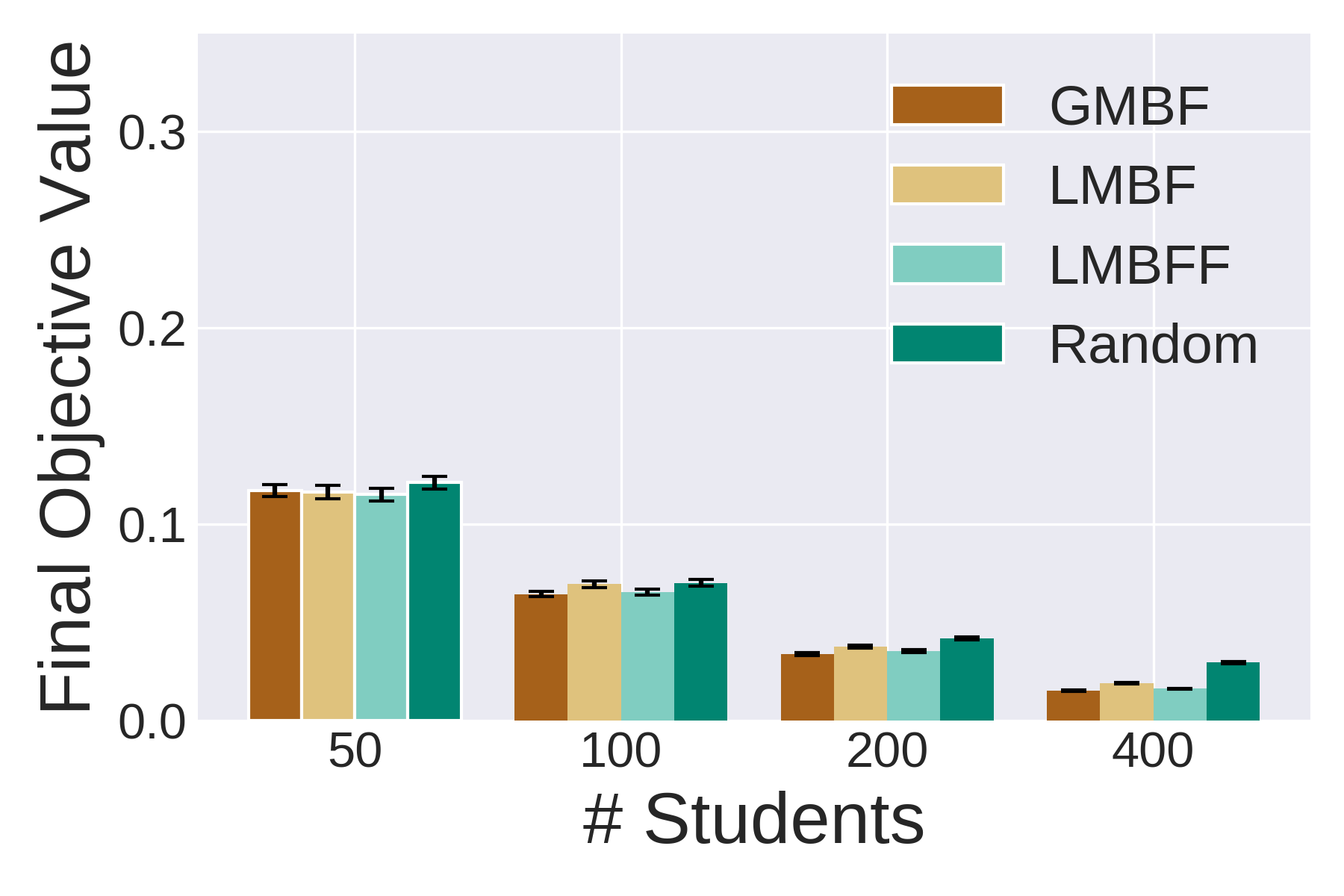} \\
     (b) D2 - \texttt{SAHC} & (e) D2 - \texttt{FMHC} \\
     \includegraphics[width=43mm]{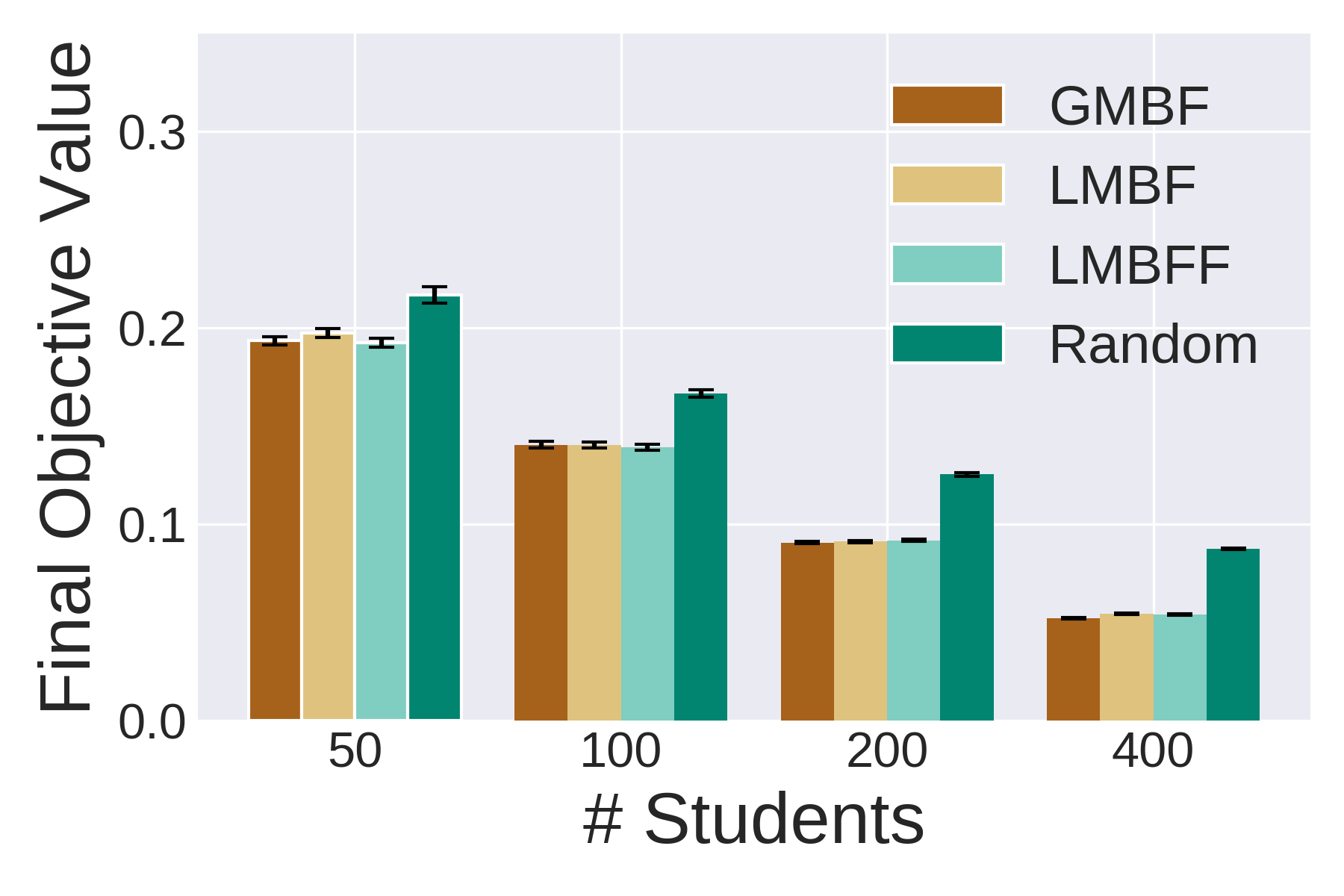} &
     \includegraphics[width=43mm]{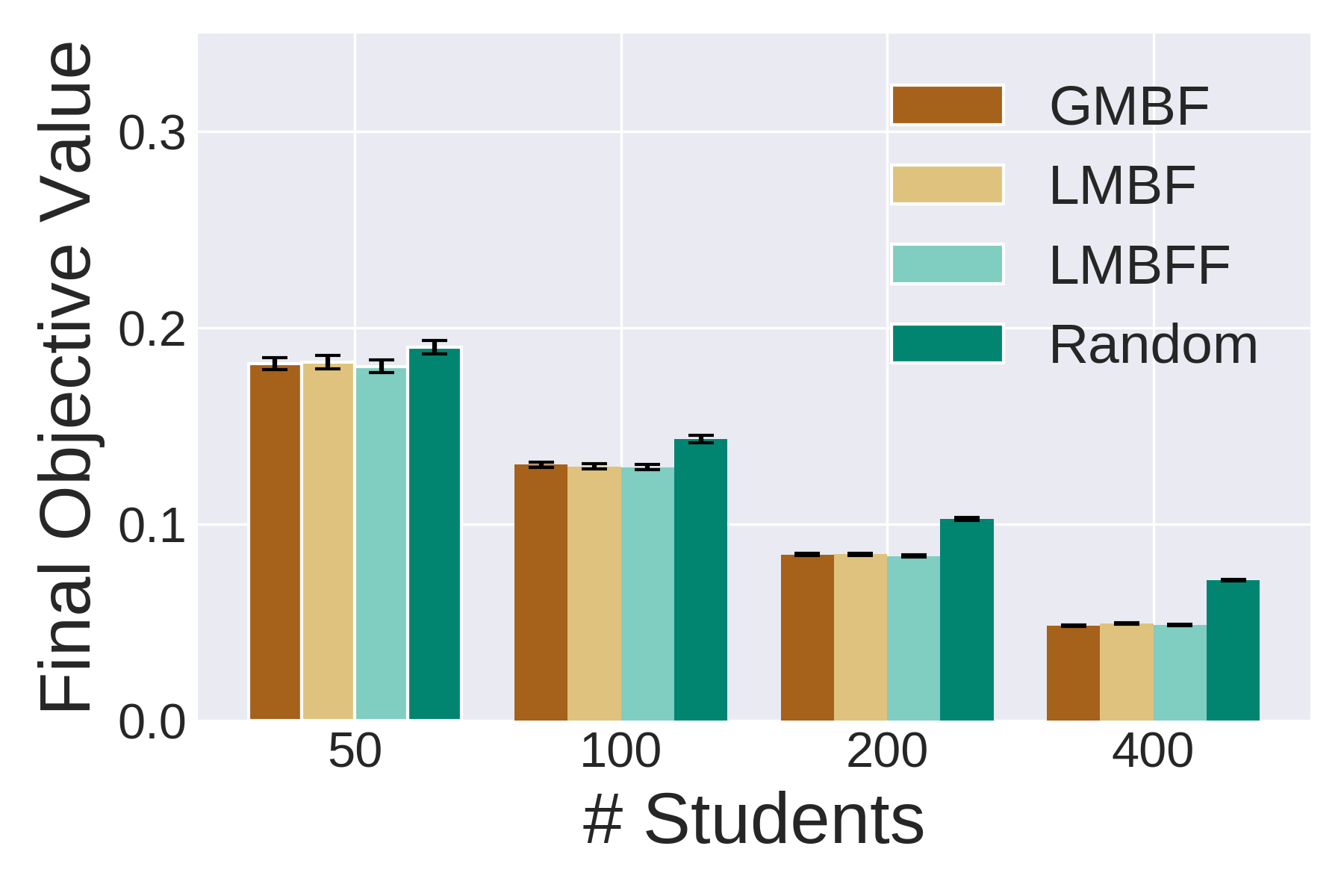} \\
     (c) D3 - \texttt{SAHC} & (f) D3 - \texttt{FMHC} \\
    \end{tabular}
    \caption{Final value of the objective function starting from all four different initial assignments for both refinement algorithms \texttt{SAHC} and \texttt{FMHC}, in all three synthetic datasets. The smallest value the better.}
    \label{fig:obj_final}
\end{figure}
Regarding the different initial assignment algorithms, \texttt{Random} performs the worst and should not be used; among the rest, there is little variation in performance. 
Regarding the different refinement algorithms, \texttt{FMHC} which allows for uphill moves, is more robust to the initial assignment and performs better than \texttt{SAHC} which gets easily trapped to local minima. 
Moreover, we see that as the number of students increases, we get better solutions. 
This is because, the more students we have, the higher flexibility we get to arrange them into teams of high benefit.
Another observation is that D3 is the most difficult one, and D1 the easiest one, verifying our assumptions on how to generate datasets of increasing difficulty.
For the subsequent experiments, we use the \texttt{GMBF} initial algorithm and \texttt{FMHC} refinement and this combination refers to \texttt{FERN}.

\subsection{Comparison Against Other Approaches}
We compare the performance of \texttt{FERN} against that of competing approaches described in Subsection~\ref{sec:baselines}, on the different datasets. The goal of this experiment is to evaluate how the different methods performed with respect to our threefold goal. In order to evaluate the performance with respect to the average skill deficiency, we report the percentage of teams that fulfilled the skill requirements. To evaluate the individual benefit of the final solutions, we report the average individual benefit (Eq.~\ref{eq:avgindben}) percentage, which measures the percentage of teammates that a student benefits from on average. Last, in order to evaluate the group fairness, we report the variance in group benefit (Eq.~\ref{eq:fairness}). 
For D1, D2 and D3, we report the results for $N=100$. Table~\ref{tab:baselines} presents the results.  

We can see that our method \texttt{FERN} is the best performing with respect to individual benefit in all datasets, as well as with respect to group fairness except for D1 in which \texttt{Umeans} and \texttt{GA} perform better. Regarding group fairness, we observe that \texttt{Umeans} is worse even than \texttt{Random}, except for D1 and MATH. This means that \texttt{Umeans} not only does not account for group fairness, but it additionally hurts it. Furthermore, D3 is the most challenging with respect to group fairness, and D2 is the second most challenging. 
Regarding the performance with respect to individual benefit, for the most difficult dataset D3, we see that on average, in \texttt{FERN}'s solution, every student benefits from $87\%$ of their teammates, compared to only $56\%$ in the case of \texttt{Umeans}. \texttt{GA} is better than \texttt{Umeans} in both individual and group benefit. However, \texttt{Umeans} is among the best performing with respect to skill deficiency, as the teams it forms are compact and concentrate adequate collective ability for the skill requirements.  
Last but not least, by comparing the performance between \texttt{FERN} and \texttt{GMBF}, we see that the refinement further improves the solution with respect to benefit.
Overall, we see that \texttt{FERN} outperforms every other method with respect to benefit and this does not come with a significant cost for the skill requirements goal.
\begin{table}[!t]
\renewcommand{\arraystretch}{1.3}
\begin{threeparttable}
	\caption{Comparison of the baselines with our method \texttt{FERN} on the different datasets.}
	\label{tab:baselines}
	\centering
	\setlength{\tabcolsep}{1.3em}
	\centering
	\begin{tabular}{llrrr}
		\hline
        Dataset & Method & Teams met skills & $\mathcal{Y}$ & $\mathcal{Z}$\\
		\hline
		\multirow{5}{*}{D1} & \texttt{Random} & 73.68 &59.23 &0.03 \\
		& \texttt{Umeans} & 95.21 &56.41 &\textbf{0.00} \\
		& \texttt{GA} & 72.05 &70.00 &\textbf{0.00}\\
		& \texttt{GMBF} & \textbf{98.21}	&86.01	&0.01 \\
		& \texttt{FERN} & 80.43	&\textbf{93.00}	&0.01\\
		\hline
		\multirow{5}{*}{D2} & \texttt{Random} & 77.98 &58.85 &164.48\\
		& \texttt{Umeans} & \textbf{99.82} &55.65 &228.22 \\
		& \texttt{GA} & 75.05 &68.79 &35.74\\
		& \texttt{GMBF} & 98.61	&86.36	&2.05 \\
		& \texttt{FERN} & 83.71	&\textbf{93.77}	&\textbf{1.47} \\
		\hline
		\multirow{5}{*}{D3} & \texttt{Random} & 75.85	&58.33	&645.47 \\
		& \texttt{Umeans} & 96.20 &56.45	&898.41 \\
		& \texttt{GA} & 78.75	&65.51	&229.13\\
		& \texttt{GMBF} & \textbf{98.74}	&80.62	&43.48 \\
		& \texttt{FERN} & 85.09	&\textbf{87.33}	&\textbf{20.61} \\
		\hline\hline
		\multirow{5}{*}{MATH} & \texttt{Random} & 75.00	&61.17	& 30.80\\
		& \texttt{Umeans} & 82.01 &83.83	&0.73\\
		& \texttt{GA} & 65.79	&69.11	& 1.67\\
		& \texttt{GMBF} & \textbf{100.00} &96.22 &2.23\\
		& \texttt{FERN} & 96.43	&\textbf{99.41} &\textbf{0.02}\\
		\hline
		\multirow{5}{*}{CSCI} & \texttt{Random} & 76.70	&59.92	&4.16\\
		& \texttt{Umeans} & 99.10 &58.43	&12.38\\
		& \texttt{GA} & 65.79	&69.11	&1.67 \\
		& \texttt{GMBF} & \textbf{100.00} &97.64	&0.01\\
		& \texttt{FERN} & 99.03	&\textbf{99.82}	&\textbf{0.00}\\
		\hline
		\multirow{5}{*}{MENG} & \texttt{Random} & 76.92	&64.13	&0.63\\
		& \texttt{Umeans} & 98.50 &64.78	&2.55\\
		& \texttt{GA} & 68.21	&75.98	&0.87\\
		& \texttt{GMBF} &\textbf{100.00}	&98.18	&0.95 \\
		& \texttt{FERN} & \textbf{100.00} &\textbf{99.75} &\textbf{0.00}\\
		\hline
	\end{tabular}
	\begin{tablenotes}
	    \item The third column (teams met skills) is the percentage of teams that fulfill the skill requirements. The fourth column ($\mathcal{Y}$) is the average percentage of teammates that a student benefits from. The last column ($\mathcal{Z}$) is the variance in group benefit (the values for group benefit correspond to percentages \%). We used $\gamma=\delta=1$ and $k=2$. The best performance is marked in bold.
	\end{tablenotes}
	\end{threeparttable}
\end{table}

\subsection{Sensitivity Analysis}
In this Section, we use \texttt{FERN} and we explore how the problem and the solution change as we scale the values of the problem's features. We mainly focus on the synthetic data, because we are interested in testing datasets of varying difficulty with respect to benefit.

\subsubsection{Distribution of the Students across the Different Protected Groups}\label{sec:distr_prot}
With this set of experiments, our goal is to determine the impact of the relative sizes of the protected groups to the difficulty of the problem.
We keep everything else fixed, except for the distribution of the students to the two protected groups. We test the following percentages: 50-50, 30-70 and 10-90. We present the results for D3 in Fig.~\ref{fig:protsize} and we report the value of the objective function.
\begin{figure}[b]
	\centering
	\includegraphics[width=0.7\linewidth]{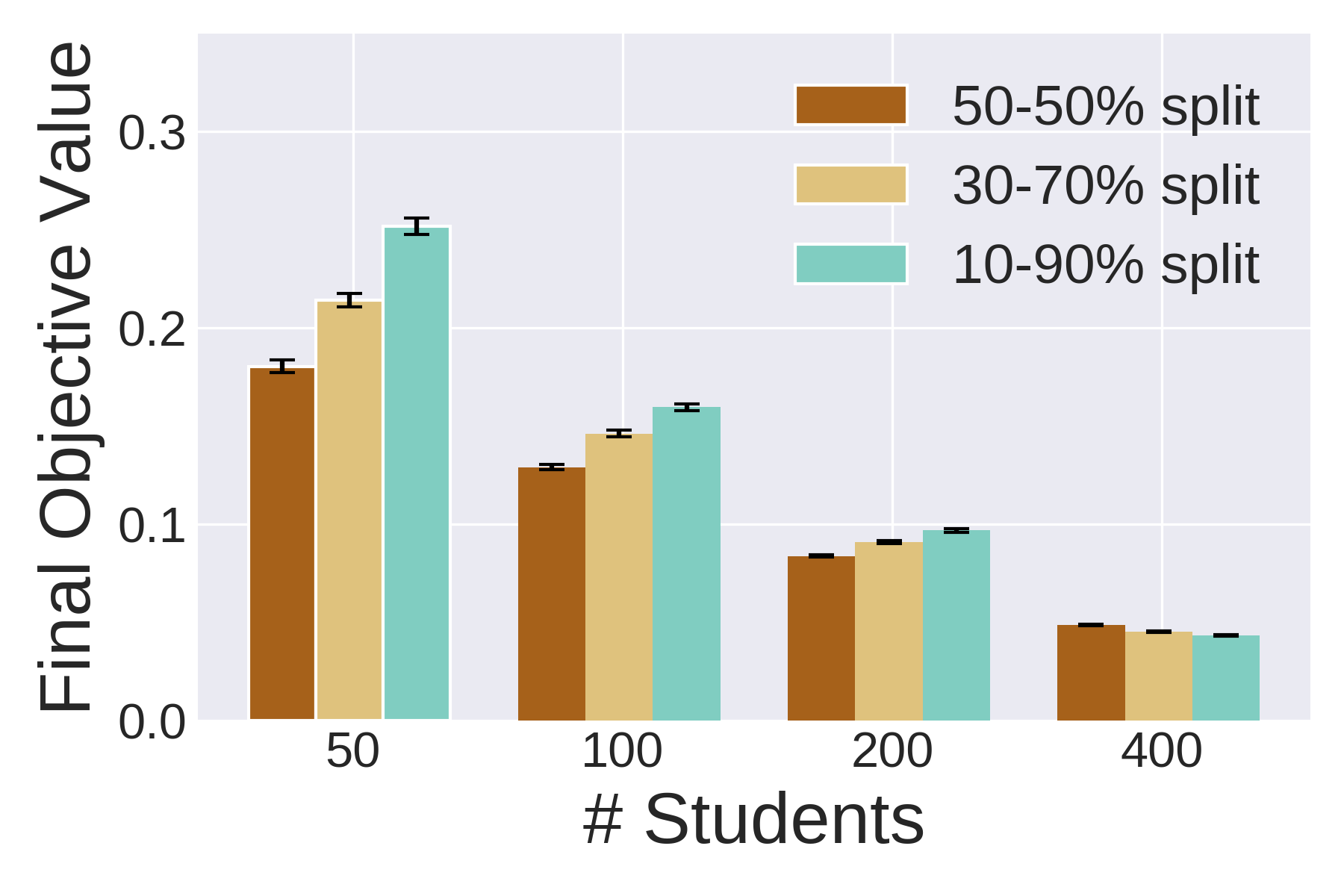}
	\label{fig:d3_protsize}
	\caption{Final value of the objective function for the different relative sizes of the protected groups, on D3. }
	\label{fig:protsize}
\end{figure}

We observe that there is not significant difference in the quality of the final solution for the different sizes of the two protected groups. 
We only show the results for D3 as the trends are similar for D1 and D2.
The only exception is for D3 and $N=50$, where there is a slight variation among the different splits. A possible explanation is that $50$ students is a relatively small input size for the algorithm to find stable solutions and is sensitive to the underlying characteristics of the data.
In conclusion, the relative sizes of the protected groups do not seem to significantly affect the difficulty of the problem. To this end, for the remaining experiments, we use equally sized protected groups.

\subsubsection{Effect of the Number of Protected Groups on the Group Benefit}
In this section, we study how the group benefit varies as we increase the number of protected groups. Fig.~\ref{fig:mul_groups} shows the results on D1, D2 and D3 for $2$, $3$, and, $4$ protected groups, when we account for the group benefit ($\delta=1$) and when we completely ignore it ($\delta=0$).
\begin{figure}
    \setlength{\tabcolsep}{0.001em}
    \begin{tabular}{cc}
     \includegraphics[width=43mm]{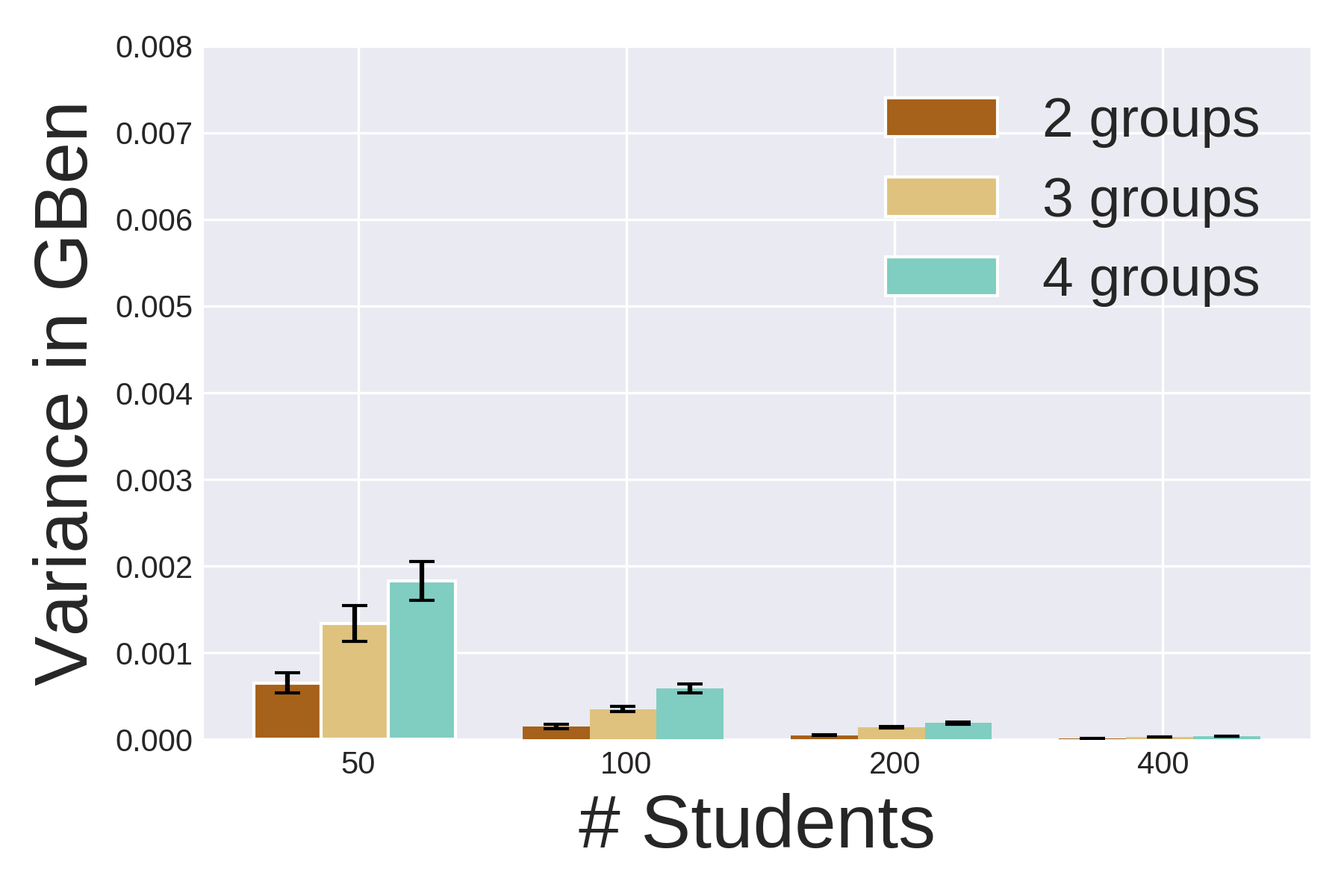} & 
     \includegraphics[width=43mm]{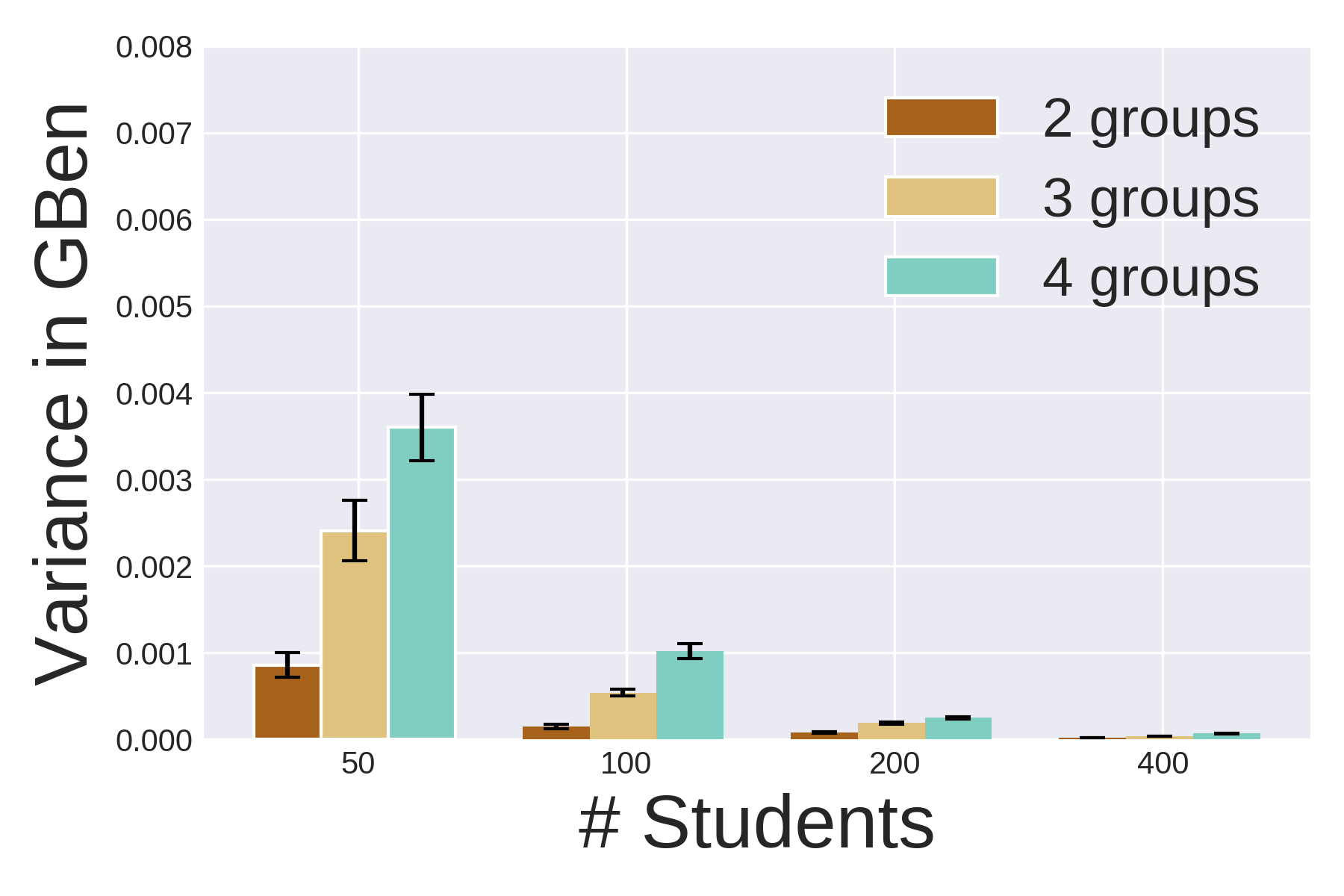} \\
     (a) D1, $\delta=1$ & (d) D1, $\delta=0$\\
     \includegraphics[width=43mm]{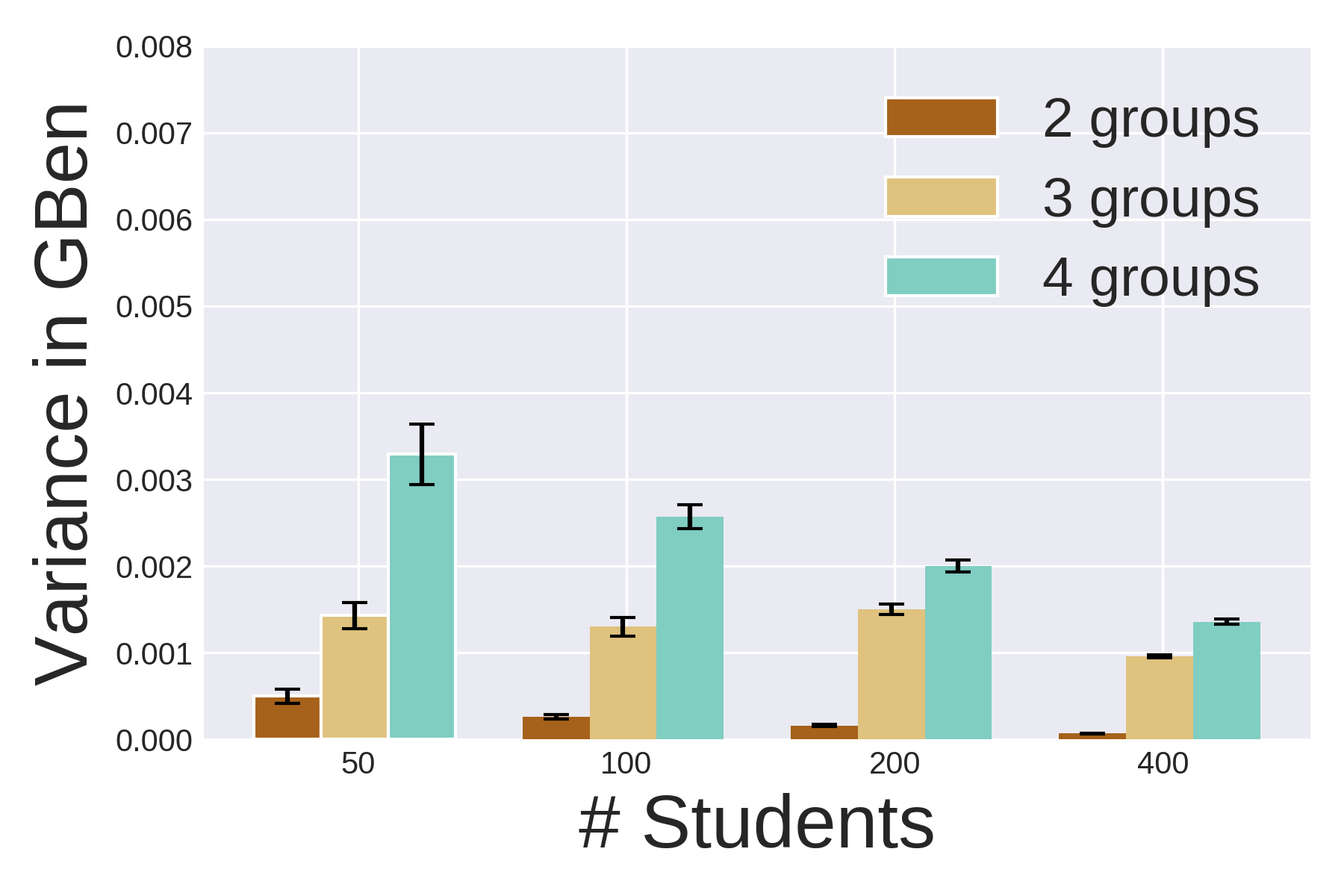} & 
     \includegraphics[width=43mm]{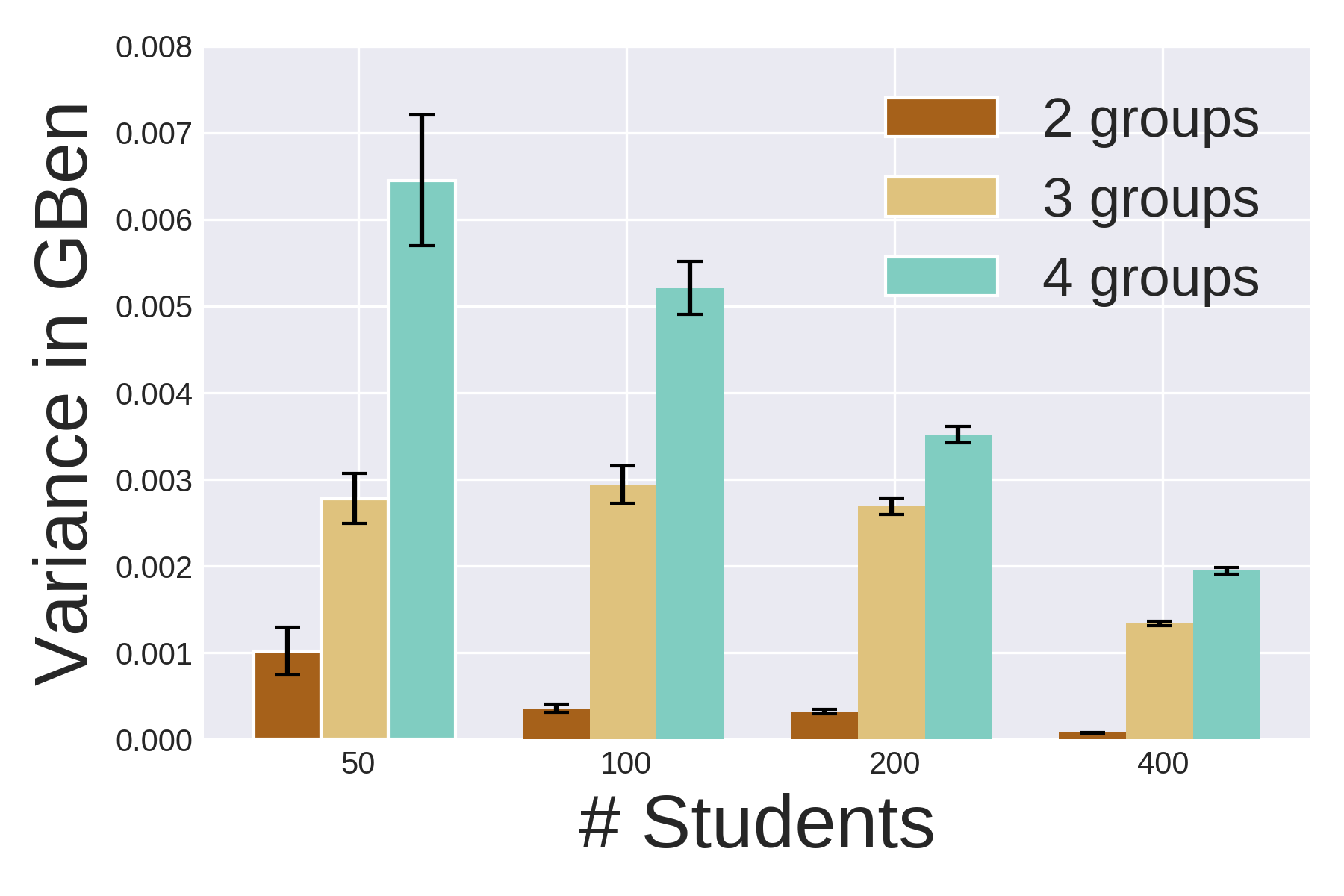} \\
     (b) D2, $\delta=1$ & (e) D2, $\delta=0$\\
     \includegraphics[width=43mm]{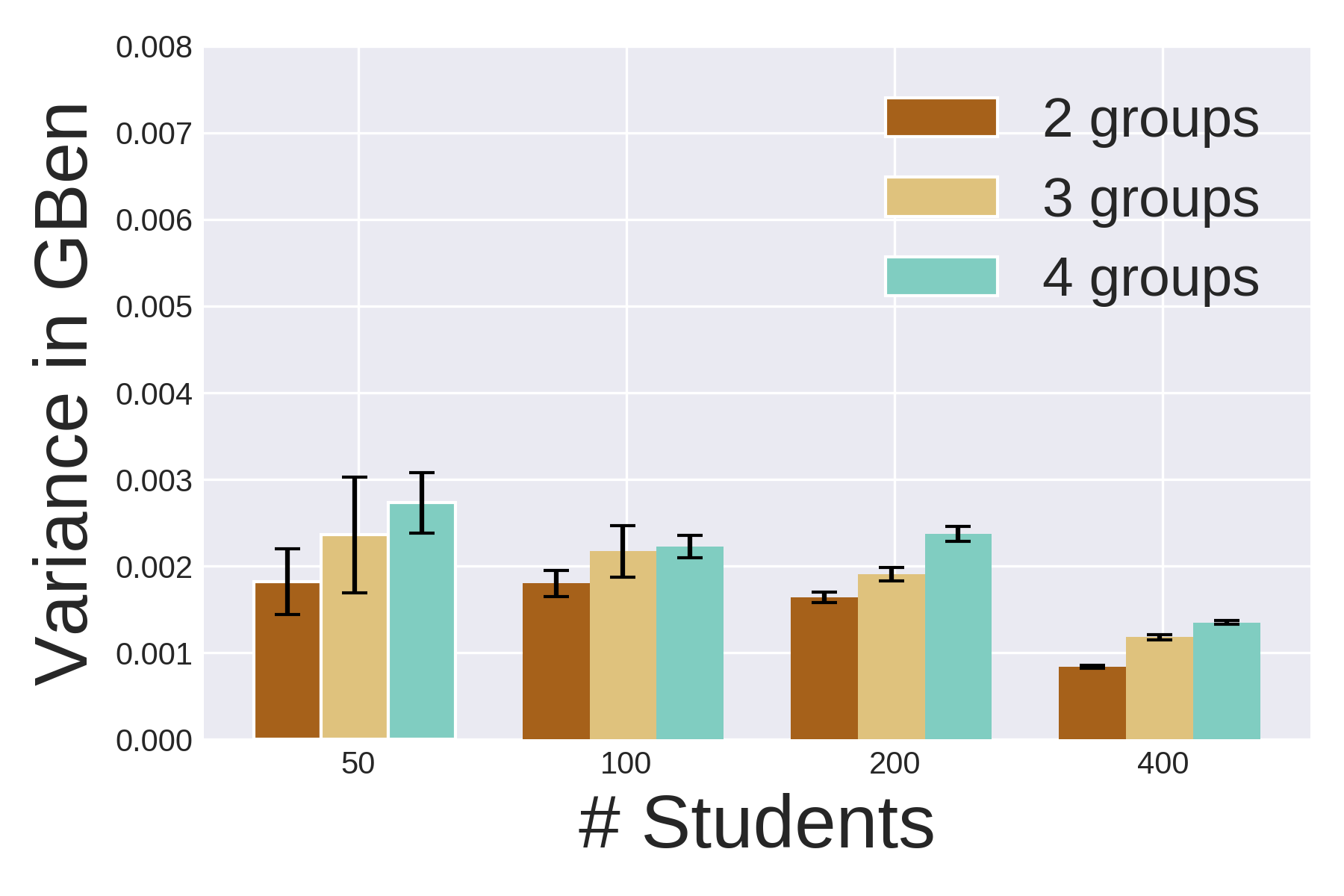} &
     \includegraphics[width=43mm]{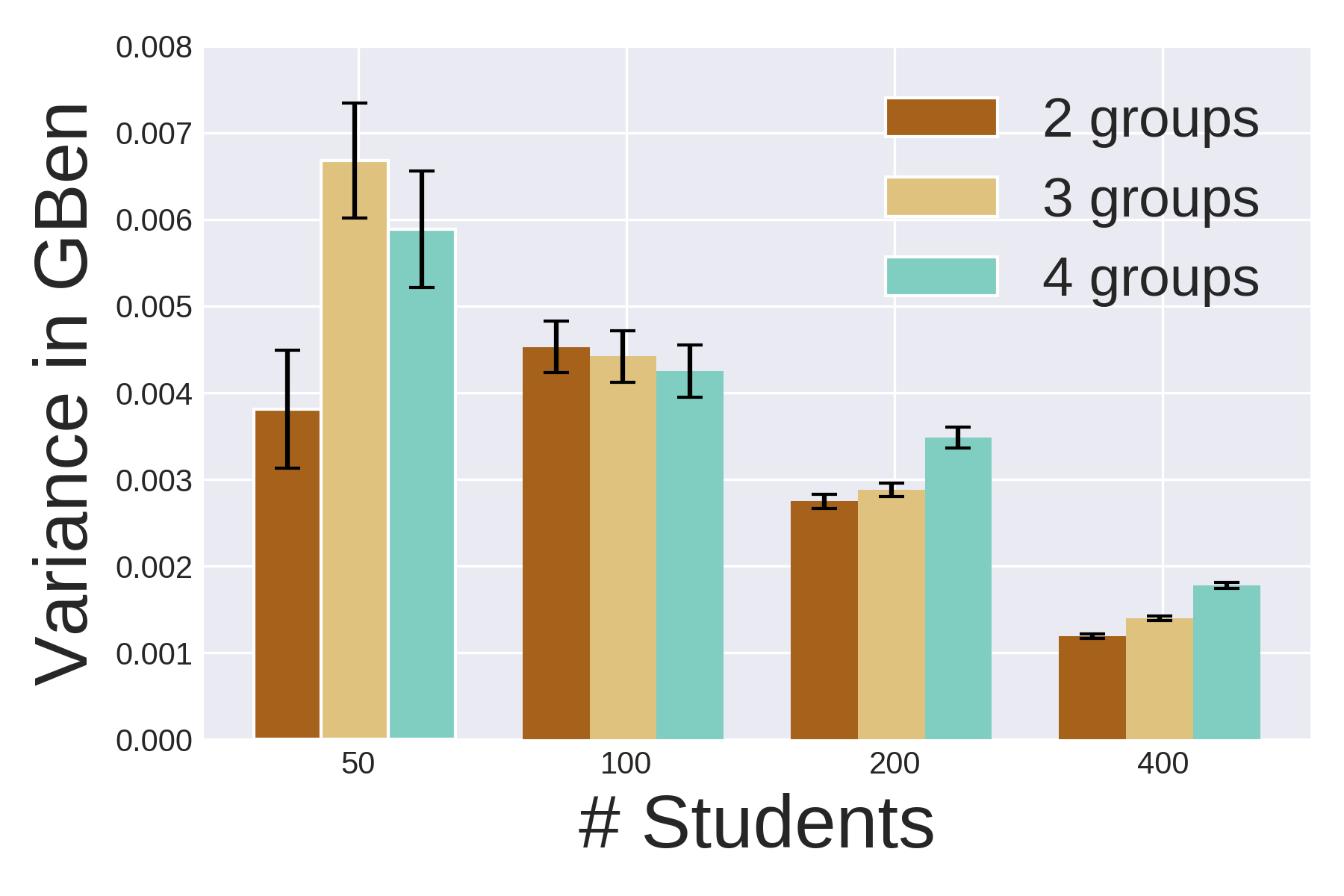} \\
    (c) D3, $\delta=1$ & (f) D3, $\delta=0$\\
    \end{tabular}
    \caption{Value of the variance in group benefit for the different numbers of protected groups, on D1, D2 and D3.}
    \label{fig:mul_groups}
\end{figure}
If we look at the results by column, we observe the following. First, as the number of groups increases, the group fairness gets more difficult to fix (with the exception of D3 for $N=50,100$ when $\delta=0$, most likely because of the extremity of the grade distribution for the 2 groups; see Fig.~\ref{fig:datasets_distr}). 
Second, as expected, the final solutions become unfair when we do not explicitly account for the group fairness (second column). This stands especially for D2 and D3, in which cases the students of each group have different grade performance and as such, it is more challenging to balance the benefit across the protected groups.    

\subsubsection{Effect of the Dimensionality of the Skill Vectors}\label{analysis-k}
We experiment with different number of dimensions for the skill vectors, $k=2,4,6$, and we evaluate the solution based on the final value of the objective function. The remaining parameters are set to the default values. We present the results for D3 in Fig.~\ref{fig:d3_k_final}. 
\begin{figure}[bt]
	\centering
	\includegraphics[width=0.7\linewidth]{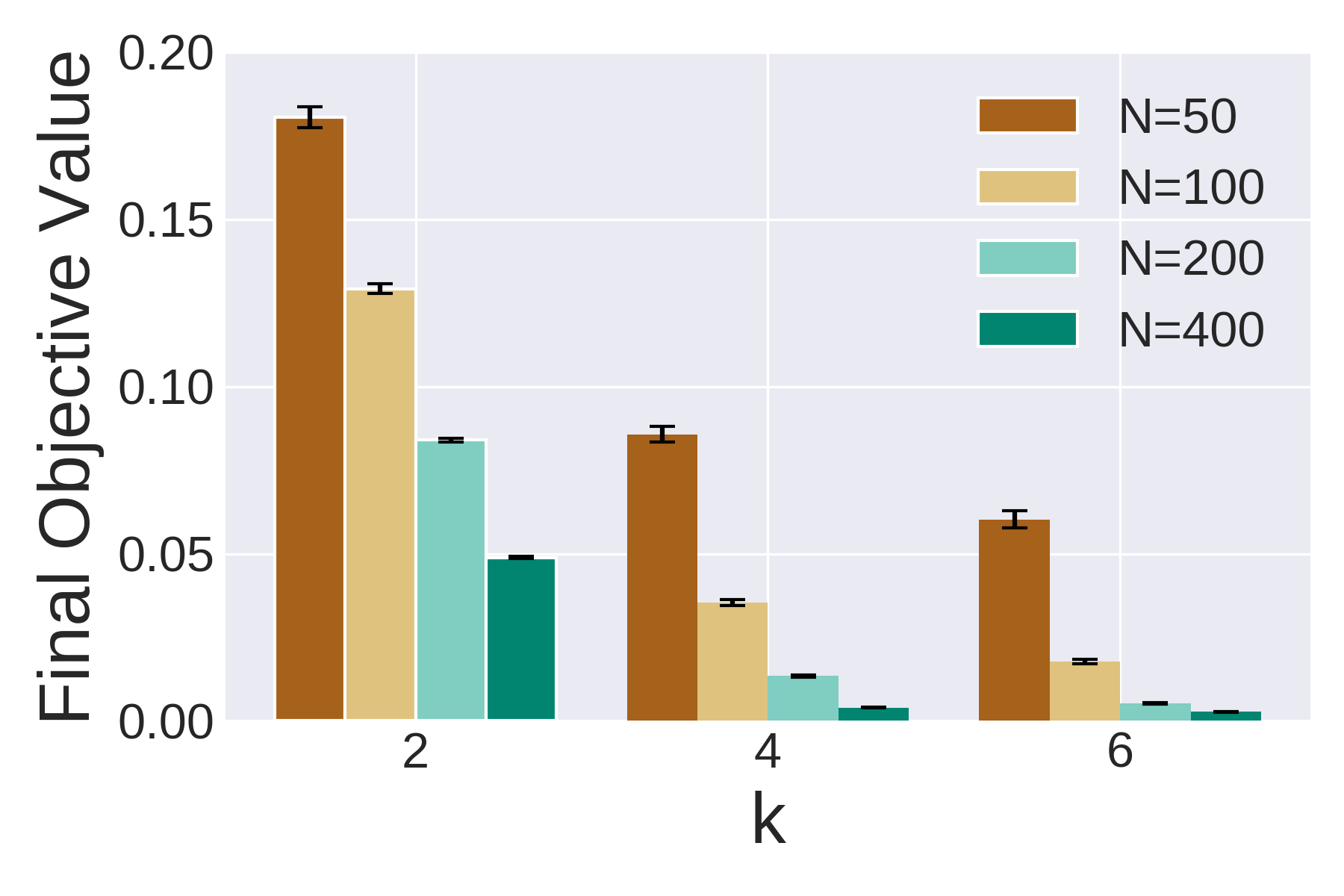}
	\caption{Final value of the objective function w.r.t. increasing number of dimensions for the skill vectors on D3.}
	\label{fig:d3_k_final}
\end{figure}
We can see that the problem gets easier as we increase the number of dimensions. This happens because, for higher values of $k$, the degrees of freedom for the skill values is higher. As a result, 
it gets easier to maximize the benefit by arranging the students properly
and the algorithm is able to find a very good and fair solution. The same results stand for the other datasets as well.

\subsubsection{Effect of the Skill Requirements Threshold}
We explore how the skill requirements vector $\mathbf{r}$ affects the difficulty of the problem and the number of teams. We run experiments for $\mathbf{r}=2,3,4$ on D3 and we set the remaining parameters to the default values.  
In Fig.~\ref{fig:d3_fm_final_r}, we plot the final value of the objective function with respect to $\mathbf{r}$. The higher the threshold, the more difficult the problem gets, as more students are needed in each team to fulfill the skill requirements and thus, it is harder to maximize the individual benefit. 
In Fig.~\ref{fig:d3_teams_r}, we plot the number of teams the algorithm generates with respect to $\mathbf{r}$. 
As we increase $\mathbf{r}$, the number of teams decreases because the average number of students per team increases. 
The trends are exactly the same for all the datasets.
\begin{figure}
	\begin{subfigure}{0.24\textwidth}
		\centering
		\includegraphics[width=\linewidth]{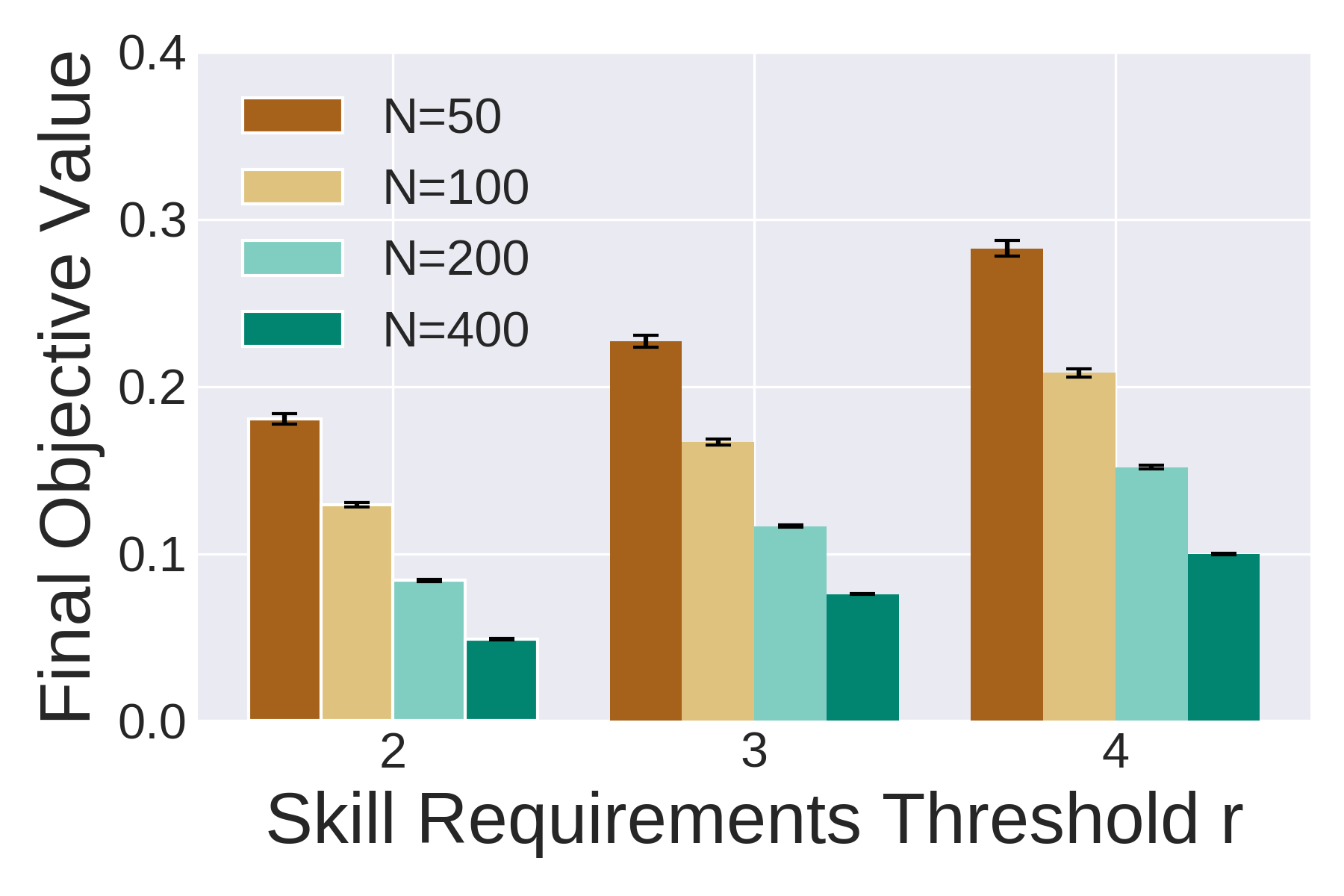}
		\caption{}
		\label{fig:d3_fm_final_r}
	\end{subfigure}%
	\begin{subfigure}{0.24\textwidth}
		\centering
		\includegraphics[width=\linewidth]{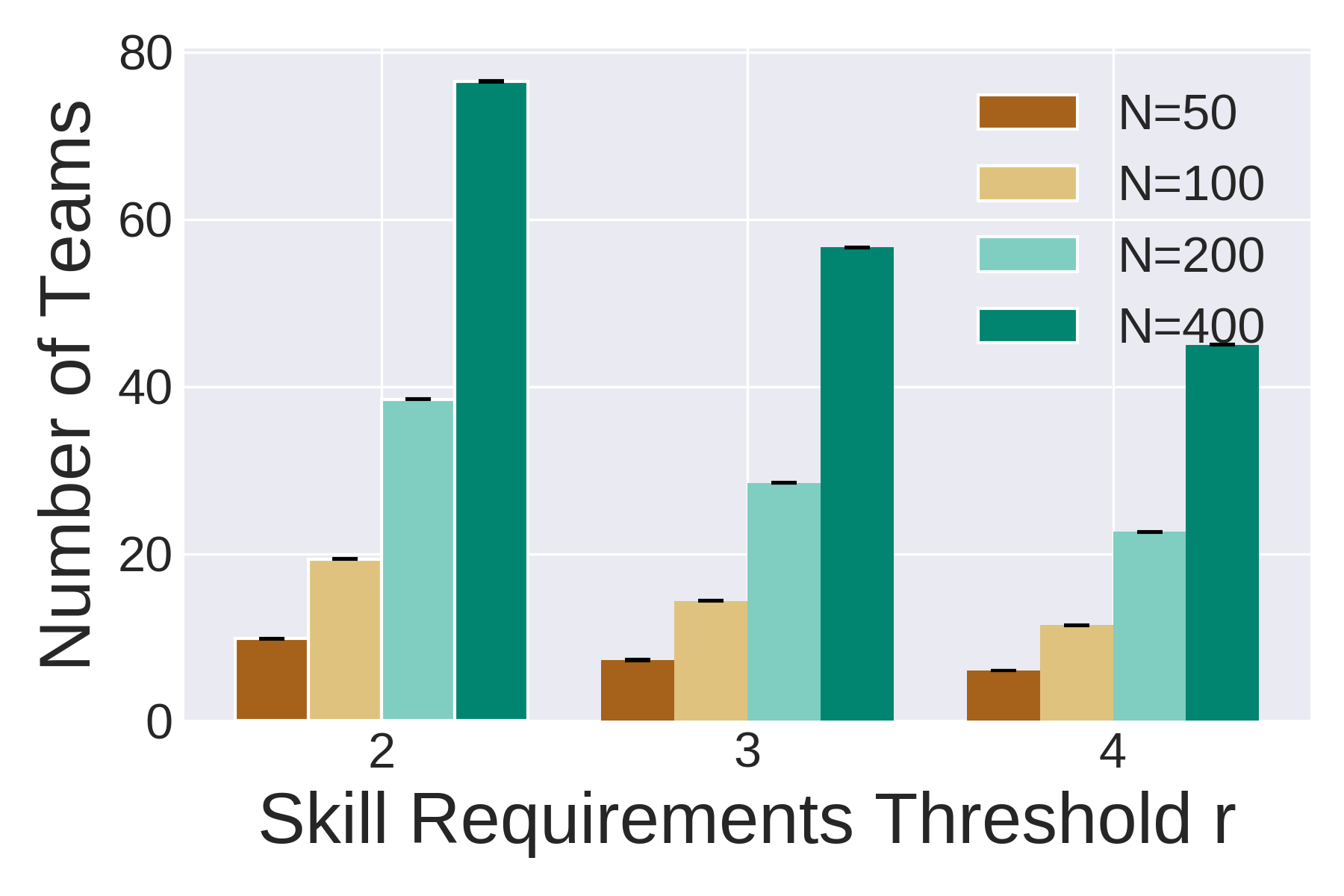}
		\caption{}
		\label{fig:d3_teams_r}
	\end{subfigure}
	\caption{Results for different values of the skill requirement threshold $\mathbf{r}$, for the different number of students in D3. (a) Final value of the objective function with respect to $\mathbf{r}$. 
    (b) Number of teams with respect to $\mathbf{r}$.}
	\label{fig:capacity}
\end{figure}

\subsubsection{Effect of each Benefit Term on the Final Solution}
How does each term of the multi-objective function affect the final solution? We answer this question by removing each term and we evaluate with the total final objective value. Fig.~\ref{fig:role_terms} presents the results.
We see that when we exclude the individual benefit ($\gamma=0, \delta=1$), the final objective value is equal to zero (optimal solution). However, this is not the case when we exclude the group benefit term ($\gamma=1, \delta=0$). 
This indicates that the individual benefit term is the hardest to optimize.

\begin{figure}[bt]
	\centering
	\includegraphics[width=0.7\linewidth]{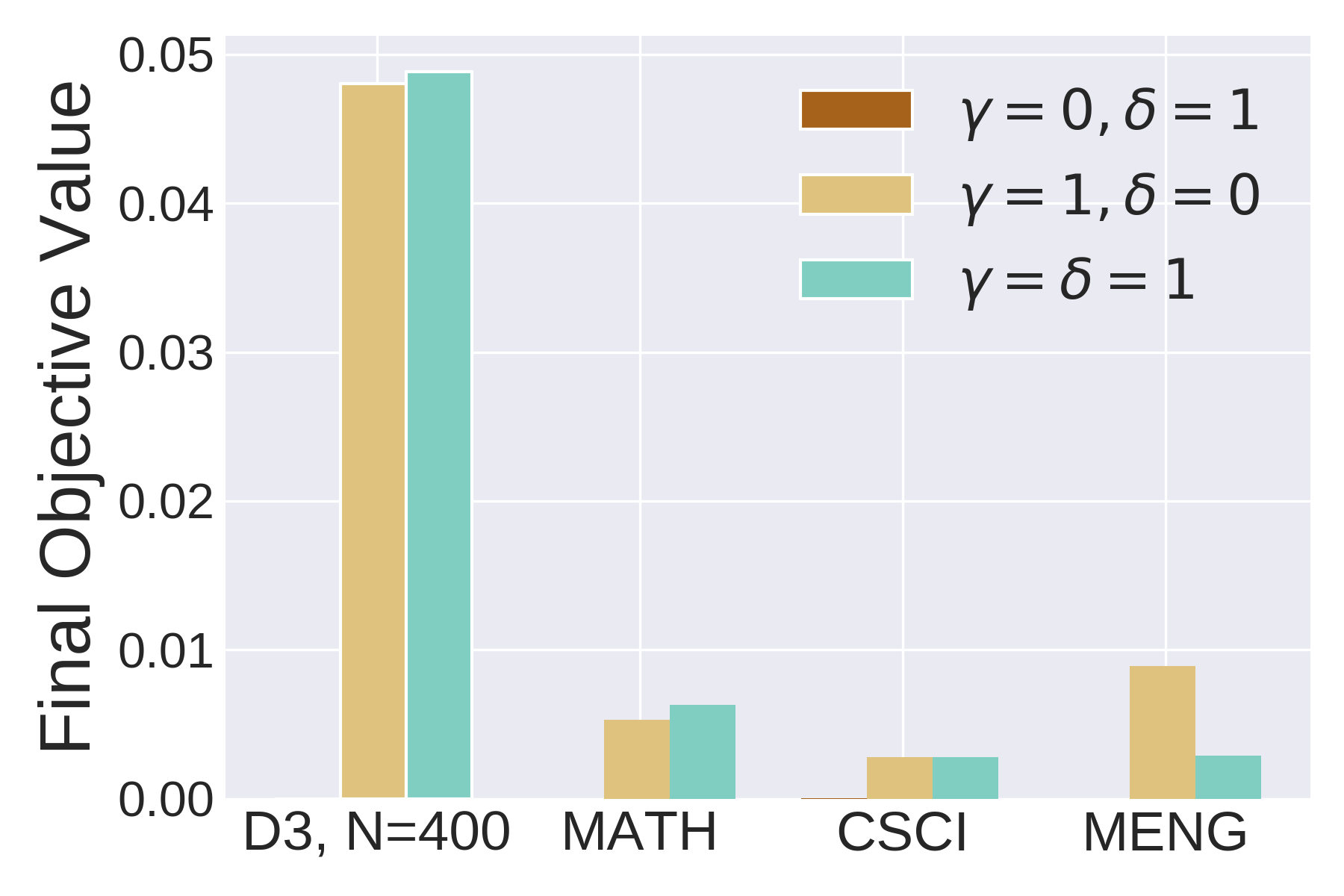}
	\caption{Final value of the objective function for the different datasets, when we exclude the individual benefit term ($\gamma=0$), when we exclude the group benefit term ($\delta=0$) and when both terms have equal contribution ($\gamma=\delta=1$).}
	\label{fig:role_terms}
\end{figure}

\subsubsection{Trade-off between Individual Benefit and Group Fairness}
Next, we study the relationship between the individual benefit and the group fairness terms, as we increase the relative importance of each one. In particular, while keeping the importance of the group term fixed ($\delta=1$), we explore how it is affected as we increase $\gamma$, i.e., as we increase the weight of the individual benefit. Looking at it the other way, this experiment also shows how the group fairness impacts the individual benefit.
Fig.~\ref{fig:tradeoff_benefits} presents the results on the most difficult dataset, D3. 

Note that we aim to maximize the individual benefit and minimize the variance in group benefit. 
We can see that when we aim to optimize the individual benefit, we end up hurting the group fairness.  
In particular, for $\gamma>1$ (Fig.~\ref{fig:tradeoff_benefits}a), the variance in group benefit deteriorates and at the same time the average individual benefit does not improve significantly.
However, this is not the case when we increase the importance of the group term (Fig.~\ref{fig:tradeoff_benefits}b). We observe that as we increase $\delta$, the variance in group benefit improves and the average individual benefit is almost constant. 
This means that we should use higher value for $\delta$ than that of $\gamma$ for the most challenging datasets D2 and D3 in order to further decrease the variance in group benefit. For the remaining datasets,
a value of $\delta>1$ does not further improve the group fairness.

\begin{figure*}
    \centering
    \begin{subfigure}{0.35\textwidth}
        \centering
        \includegraphics[width=\linewidth]{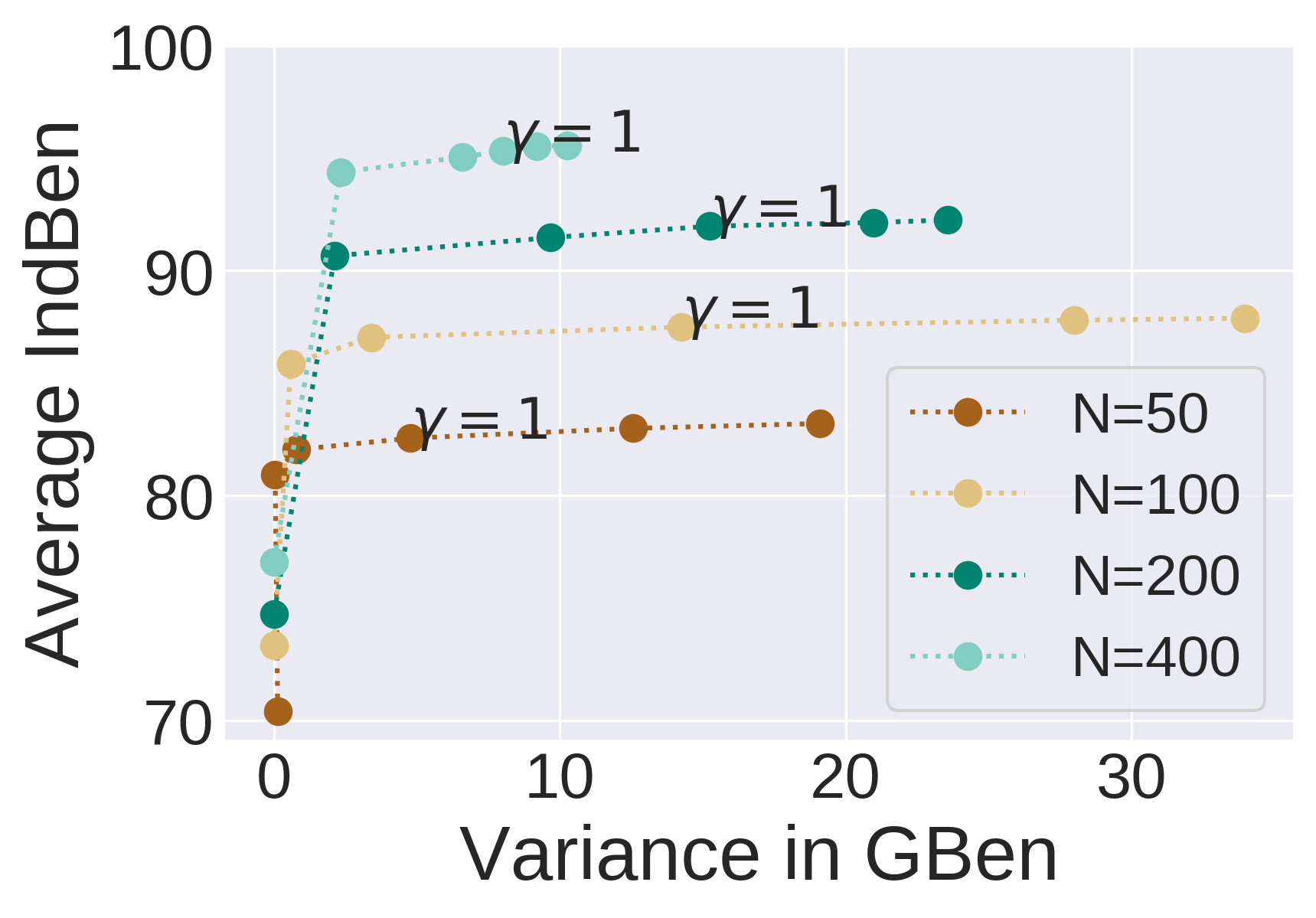} 
        \caption{$\gamma=\{0, 0.2, 0.5, 1, 2, 3\}, \delta=1$}
    \end{subfigure}\hfil
    \begin{subfigure}{0.35\textwidth}
    	\centering
        \includegraphics[width=\linewidth]{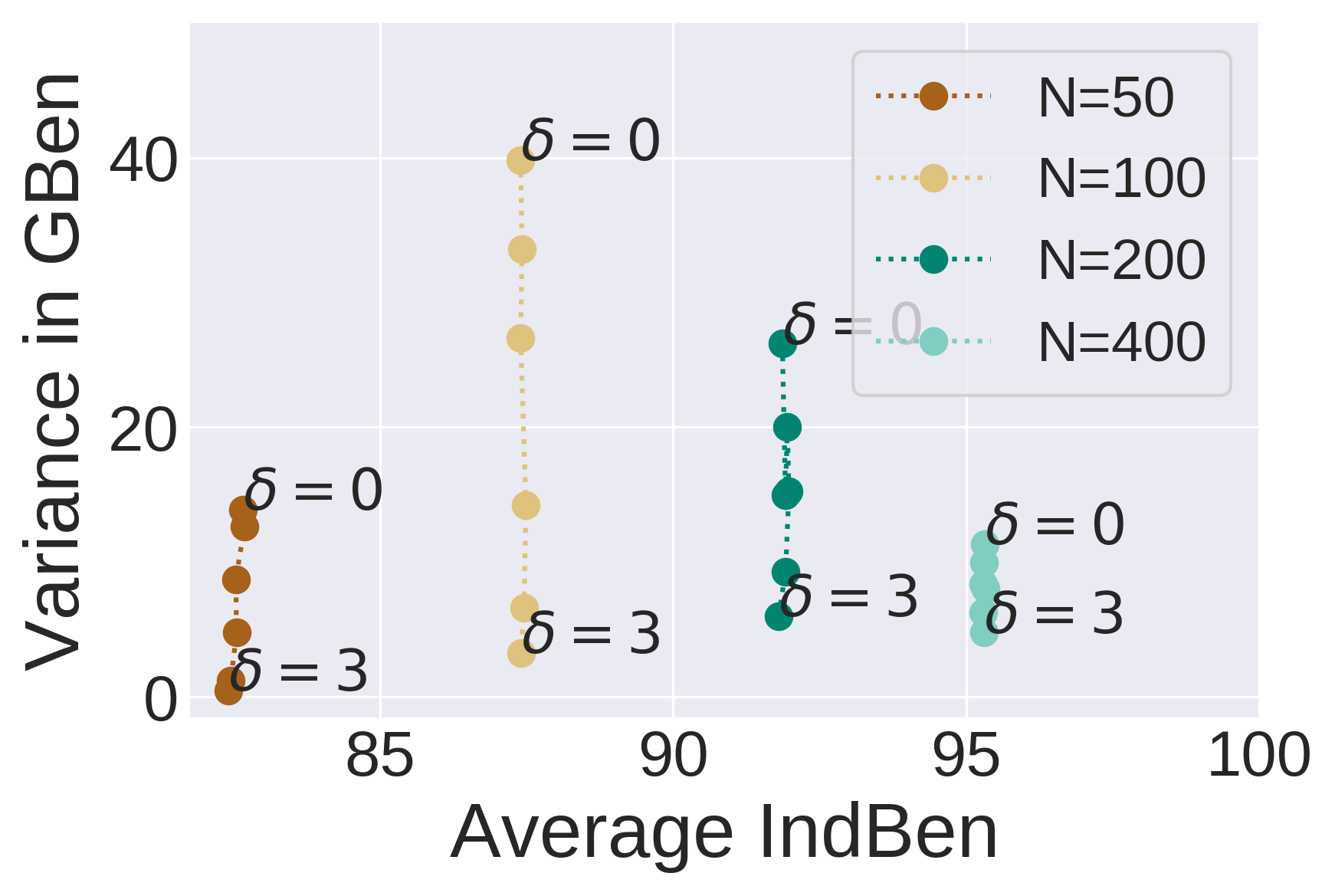} 
        \caption{$\gamma=1, \delta=\{0, 0.2, 0.5, 1, 2, 3\}$}
     \end{subfigure}
    \caption{Trade-off between the individual and the group fairness on D3, as we increase $\gamma$ (a) and as we increase $\delta$ (b). The values for \textit{Average IndBen} and \textit{GBen} correspond to percentages (\%).}
    \label{fig:tradeoff_benefits}
\end{figure*}

\subsubsection{Group Benefit of the Final Solution}
\label{sec:motivation}
 
In this subsection, we support our choice for the definition of group benefit (Eq.~\ref{eq:gben}) and the way we measure group fairness (Eq.~\ref{eq:fairness}) by showing how the group benefit changes as we change the relative importance of $\mathcal{Y}$ and $\mathcal{Z}$ in Eq.~\ref{eq:obj}. We compare the solution given by \texttt{FERN} against the one given by a group fairness unaware method, i.e., the solution by \texttt{Umeans}, on D3 and $N=400$. Table~\ref{tab:discussion} presents the results. 

According to the \texttt{Umeans} results, we can see that $q_1$ is the least benefited protected group, as its members benefit from only $26.64\%$ of their teammates, compared to $86.23\%$ of their teammates for members of $q_2$. In the first case where we ignore the individual benefit ($\gamma=0$), \texttt{FERN} increases the group benefit of the least benefited protected group $q_1$, and decreases the group benefit of $q_2$ in order to perfectly balance the group benefit and impose group fairness. In the other two cases ($\gamma=1$), the group benefit of both $q_1$ and $q_2$ increases, and $\textit{GBen}(q_1)$ in particular, increases from $26.64\%$ to about $98\%$.
Now, the difference is that when we ignore the group fairness ($\delta=0$), the gap between $\textit{GBen}(q_1)$ and $\textit{GBen}(q_2)$ is larger than the gap when $\delta=1$. The findings are similar for the other datasets as well. The second term of the objective maximizes both the individual benefit and the group benefit of each protected group. Then, the third term additionally balances the difference in group benefit.  
\begin{table}[!t]
\renewcommand{\arraystretch}{1.3}
\begin{threeparttable}
	\caption{Effect of the individual benefit and the group fairness terms to the group benefit for two protected groups on D3.}
	\label{tab:discussion}
	\setlength{\tabcolsep}{0.3em}
	\centering
	\begin{tabular}{lrrr||rrr}
	    \hline
		& \multicolumn{3}{c}{\texttt{Umeans}}&\multicolumn{3}{c}{\texttt{FERN}} \\
		\hline
        & $\textit{GBen}(q_1)$ & $\textit{GBen}(q_2)$ & $\mathcal{Z}$ & $\textit{GBen}(q_1)$ & $\textit{GBen}(q_2)$ & $\mathcal{Z}$\\
		\hline
        $\gamma=0, \delta=1$ & 26.64 & 86.23 & 887.80 & 77.01&	77.06	&0.00 \\
		$\gamma=1, \delta=0$ & 26.64 & 86.23 & 887.80 & 98.69 &91.95 &11.35\\
		$\gamma=1, \delta=1$ & 26.64 & 86.23 & 887.80 & 98.16&	92.50&	8.01\\
		\hline
	\end{tabular}
	\begin{tablenotes}
	    \item The \texttt{Umeans} results correspond to a group fairness unaware solution. $\textit{GBen}(q_1)$ and $\textit{GBen}(q_2)$ is the group benefit of protected group $q_1$ and $q_2$, respectively. $\mathcal{Z}$ is the variance in \textit{GBen}. The values for $\textit{GBen}$ represent percentages (\%). 
	\end{tablenotes}
	\end{threeparttable}
\end{table}

\section{Conclusion}
In this work we study the fair team formation problem where we maximize the benefit coming from peer learning. Given a set of students and a target task with specific skill requirements, we aim at assigning the students to teams of flexible size such that: 1) the collective ability of each team is adequate to successfully complete the target task, 2) the number of students that benefit from the team work is maximized and, 3) the different protected groups are treated fairly with respect to benefit. To the best of our knowledge, 
this is the first work that introduces fairness as \textit{equal opportunity} in collaborative learning for both the individual and protected group level. We formulate the problem as a multi-objective function and we propose \texttt{FERN}; a hill-climbing heuristic. Experimental results in both synthetic and real-world data show the effectiveness of the proposed method.

\ifCLASSOPTIONcaptionsoff
  \newpage
\fi

\bibliographystyle{IEEEtran}
\bibliography{bare_jrnl}

\end{document}